\newcommand{\bn}{{\mathbf n}}
\newcommand{\dx}{\mathrm{d}x}
\newcommand{\dt}{\mathrm{d}t}
\newcommand{\drho}{\mathrm{d}\rho}
\newcommand{\bxi}{\mathbf{\xi}}
\newcommand{\bhatn}{\hat{\mathbf{n}}}
\newtheorem{theorem}{\bf Theorem}[section]
\newtheorem{remark}{\bf Remark}[section]
\newtheorem{proof}{Proof}[section]
\title{A new Phase-Field Model for Anisotropic Surface Diffusion: Anisotropic Cahn-Hilliard Equation with Improved Conservation (ACH-IC)}
\author{
Zeyu Zhou\thanks{Department of Mathematics, Southern University of Science and Technology, Shenzhen 518055, China (\texttt{zhouzy2021@mail.sustech.edu.cn}).}		
\and Wei Jiang\thanks{School of Mathematics and Statistics, Hubei Key Laboratory of Computational Science, Wuhan University, Wuhan 430072, China
(\texttt{jiangwei1007@whu.edu.cn}).}
\and Tiezheng Qian\thanks{Department of Mathematics, The Hong Kong University of Science and Technology, Hong Kong SAR 999077, China	
	(\texttt{maqian@ust.hk}).}
\and Zhen Zhang\thanks{Corresponding author. Department of Mathematics,  National Center for Applied Mathematics (Shenzhen), Southern University of Science and Technology, Shenzhen 518055, China
(\texttt{zhangz@sustech.edu.cn}).}
}
\begin{document}
\maketitle

\begin{abstract}
As popular approximations to sharp-interface models, the Cahn-Hilliard type phase-field models are usually used to simulate interface dynamics with volume conservation. However, the convergence rate of the volume enclosed by the interface to its sharp-interface limit is usually at first order of the interface thickness in the classical Cahn-Hilliard model with constant or degenerate mobilities. In this work, we propose a variational framework for developing new Cahn-Hilliard dynamics with enhanced volume conservation by introducing a more general conserved quantity. In particular, based on Onsager's variational principle (OVP) and a modified conservation law, we develop an anisotropic Cahn-Hilliard equation with improved conservation (ACH-IC) for approximating anisotropic surface diffusion. The ACH-IC model employs a new conserved quantity that approximates a step function more effectively, and yields second-order volume conservation while preserving energy dissipation for the classical anisotropic surface energy.  The second-order volume conservation as well as the convergence to the sharp-interface surface diffusion dynamics is derived through comprehensive asymptotic analysis. Numerical evidence not only reveals the underlying physics of the proposed model in comparison with the classical one, but also demonstrates its exceptional performance in simulating anisotropic surface diffusion dynamics.
\end{abstract}

\section{Introduction}
Surface diffusion is a fundamental process that plays an essential role in determining the behavior and properties of materials in various fields~\cite{sutton1995interfaces,doll1987recent,ehrlich1980surface,li1999numerical,xia1997finite,jiang2020}. It is a type of diffusion that occurs at the interface between two materials and is characterized by the motion of atoms or molecules along the surface \cite{ehrlich1980surface}.
Due to the difference in lattice orientations at material surfaces/interfaces in solids, the surface energies in solid materials usually present a property of anisotropy, which leads to anisotropic evolution processes. These are very common in solid materials \cite{xiao1991anisotropic,sutton1995interfaces}.


The mathematical formulation of (anisotropic) surface diffusion can be expressed as a geometric flow, as introduced by Mullins\cite{mullins1957theory}:
\begin{equation}\label{velocity}
	V_n = \Delta_s \mu,
\end{equation}
where $V_n$ represents the normal velocity of a closed $(d-1)$-dimensional surface $\Gamma(t)$. The operator $\Delta_s := \nabla_s \cdot \nabla_s$ denotes the surface Laplace-Beltrami operator, where $\nabla_s$ represents the surface gradient with respect to $\Gamma(t)$. The chemical potential $\mu$ is obtained through the thermodynamic variation of the total interfacial free energy given by\cite{Jiang17,jiang2020}
\begin{equation}\label{original_aniso_energy}
	W(\Gamma) = \int_\Gamma \gamma(\mathbf{n}) \mathrm{d}s,
\end{equation}
where $\gamma(\mathbf{n})$ represents the anisotropic surface energy density function with respect to the unit outward normal vector $\mathbf{n} = (n_1, n_2, \ldots, n_d)^T $. 
To express the thermodynamic variation of \eqref{original_aniso_energy}, we extend the anisotropic energy density $\gamma(\mathbf{n}):~ \mathbb{S}^{d-1} \rightarrow \mathbb{R}^{+}$ to a homogeneous function $\hat{\gamma}(\mathbf{p}): \mathbb{R}_*^d :=~ \mathbb{R}^d \backslash\{\mathbf{0}\} \rightarrow \mathbb{R}^{+}$, satisfying  (i) $\hat{\gamma}(c\mathbf{p}) = c\hat{\gamma}(\mathbf{p})$ for $c > 0$; (ii) $\hat{\gamma}(\mathbf{p})|_{\mathbf{p} = \mathbf{n}} = \gamma(\mathbf{n})$. A typical homogeneous extension \cite{deckelnick2005computation,jiang2019sharp} is
\begin{equation}\label{Cahn_transform}
	\hat{\gamma}(\mathbf{p}) := |\mathbf{p}| \gamma\left(\frac{\mathbf{p}}{|\mathbf{p}|}\right), \quad \forall \mathbf{p} \in \mathbb{R}_*^d.
\end{equation}
Next, we need to introduce the Cahn-Hoffman ${\bm\xi}$-vector\cite{hoffman1972vector,wheeler1999cahn} as:
\[
{\bm\xi}(\mathbf{n}) := \nabla\hat{\gamma}(\mathbf{p})\big|_{\mathbf{p}=\bn}.
\]
With these definitions, the chemical potential $\mu$ can be obtained by the thermodynamic variation as $\mu = \nabla_s \cdot {\bm\xi}$\cite{hoffman1972vector}, and the geometric flow in \eqref{velocity} is reformulated as:
\begin{equation}\label{geometric_flow}
	V_n = \Delta_s (\nabla_s \cdot {\bm\xi}).
\end{equation}
This geometric flow conserves the volume of the region enclosed by the surface $\Gamma(t)$ and decreases the total free energy $W(\Gamma)$ over time.



Although the geometric flow \eqref{velocity} or \eqref{geometric_flow} offers the accurate mathematical description of anisotropic surface diffusion, numerically solving it could be challenging and often requires complex numerical techniques to preserve the former two important properties as well as good mesh quality\cite{jiang21jcp}. Moreover, the pinch-off phenomenon often takes place during the dynamic evolution, i.e., a continuous film is split into two or more parts. In general, it is quite challenging for sharp-interface approaches to deal with topological change events, especially for three-dimensional cases. In order to tackle these difficulties, phase-field approaches as approximations to \eqref{geometric_flow} are proposed in the literature for simulating surface diffusion problems\cite{gugenberger2008comparison,jiang2012phase}. We consider a phase-field model with the free energy functional proposed by Torabi {\it {et al.}}\cite{torabi2009new}
\begin{equation}\label{Torabi_energy}
	E(u) = \int_{\Omega} \frac{\gamma(\boldsymbol{n})}{\varepsilon}\left(F(u)+\frac{\varepsilon^2}{2}|\nabla u|^2\right)\mathrm{d}x,\quad u\in\mathbb{R}^d,
\end{equation}
where  $u\in[-1,1]$ is the difference in the concentration fraction of the two components, 
and its extreme values $\pm1$ correspond to the pure components (also called bulk phases). Moreover, $ F(u) = (u^2-1)^2/4 $ represents the Ginzburg-Landau double-well potential, $\varepsilon$ is the interfacial thickness parameter, and $\boldsymbol{n}$ is the normal vector defined by $\boldsymbol{n} = \nabla u/|\nabla u|$. The iso-contour $u=0$ is typically used to implicitly capture the interface's motion.
By applying traditional $H^{-1}$ gradient flow and rescaling by a factor $\varepsilon$, the classical anisotropic Cahn-Hilliard equation (ACH) is obtained:
\begin{align}\label{eq_M}
	\begin{cases}
		\begin{aligned}
			& u_t = \varepsilon^{-1}\nabla \cdot (M(u) \nabla \mu), \\
			&\mu =\frac{\delta E}{\delta u}= \varepsilon^{-1}\gamma(\boldsymbol{n}) F'(u) - \varepsilon \nabla \cdot \mathbf{m}(u),\\
			&\mathbf{m}(u) = \gamma(\boldsymbol{n})\nabla u + \frac{(I-\boldsymbol{n}\boldsymbol{n}^T)\nabla_{\boldsymbol{n}}\gamma(\boldsymbol{n})}{|\nabla u|}\left( \frac{1}{2}|\nabla u|^2+\frac{1}{\varepsilon^2}F(u)\right),
		\end{aligned}
	\end{cases}
\end{align}
where $M(u) = (1-u^2)^2$ is a degenerate mobility function that ensures its sharp-interface limit recovering pure surface diffusion \eqref{geometric_flow}\cite{lee2015degenerate,dziwnik2017anisotropic}.
It is noted that the term $F(u)/|\nabla u|$ in \eqref{eq_M} can be problematic since $|\nabla u| $ could vanish in some regions where $F(u)$ does not. There are mainly two ways to handle this problem. One method is using the asymptotic result $F(u)\sim\frac{\varepsilon^2}{2}|\nabla u|^2$  nearby the interface  to yield an approximation\cite{wise2007solving,torabi2009new}, and the other approach is regularizing the normal vector $\mathbf{n}$ as $\bhatn:=\nabla u/(\sqrt{|\nabla u|^2+\varepsilon^2})$\cite{salvalaglio2021doubly,salvalaglio2021doubly1}. In this paper, we take the latter strategy.

Classical phase-field models, while widely used for simulating interface problems such as surface diffusion and two-phase flow, suffer from several critical limitations. One  issue is their low resolution in terms of $\varepsilon$ in capturing interface properties, necessitating the adoption of an extremely small $\varepsilon$ to accurately track the interface's position and temporal evolution. This not only increases computational costs but also poses challenges for efficiently simulating long-term dynamics or large-scale systems\cite{salvalaglio2017morphological,salvalaglio2017phase,geslin2019phase}. Additionally, these models often fail to preserve the enclosed area (or volume) of the interface during evolution, leading to physical inconsistencies.  Especially, these models exhibit a spontaneous shrinkage effect\cite{2017Spontaneous}: when the radius of the simulated interface becomes smaller than a critical value, the interface would gradually disappear. This phenomenon is highly non-physical and limits the applicability of such models for problems involving small or thin interfaces, where accurate preservation of the enclosed volume is essential. 

Recently, there are a lot of works (e.g., Refs.~\cite{ratz2006surface,salvalaglio2021doubly,salvalaglio2021doubly1,bretin2022approximation}) dedicated to improving the approximation capabilities of the phase-field model. These can be summarized into three approaches.
One approach is by introducing an additional degenerate coefficient $g$ into the chemical potential $\mu$\cite{ratz2006surface}, i.e., modifying the second equality of \eqref{eq_M} as
\[
g(u)\mu = \gamma(\bhatn) F'(u) - \varepsilon^{2} \nabla \cdot \mathbf{m}(u),
\]
where $g(u)=(1-u^2)^p$, $p\geqslant1$. Although this approach improves the approximation effect of the phase-field model, it violates the variational structure which makes the model thermodynamically inconsistent. 
The other approach is to impose a singularity in the formulation of a new energy functional (also called de Gennes-Cahn-Hilliard (dGCH) energy)\cite{salvalaglio2021doubly1}:
\begin{equation}\label{dGCH}
	\hat{E}(u) = \int_{\Omega} \frac{1}{g(u)}\frac{\gamma(\bhatn)}{\varepsilon}\left(F(u)+\frac{\varepsilon^2}{2}|\nabla u|^2\right)\mathrm{d}x,\quad u\in\mathbb{R}^d.
\end{equation}
Formal asymptotic analysis in the isotropic case\cite{salvalaglio2021doubly} and numerical simulations\cite{salvalaglio2021doubly1} show that the equation obtained by the $H^{-1}$ gradient flow of the dGCH energy \eqref{dGCH} can approximate surface diffusion more accurately than some other phase-field approximations. However, it complicates the model by introducing more nonlinear terms, especially in anisotropic cases, leading to difficulties in its theoretical and numerical analysis.
Instead of modifying the model from the energetic perspective, Bretin et al.\cite{bretin2022approximation} proposed a third method by considering the gradient flow of \eqref{Torabi_energy} under a special metric:
\[
<f_1,f_2>_{H_0^1}=\int_\Omega M(u)\nabla\left(\frac{1}{g(u)} f_1\right)\cdot\nabla\left(\frac{1}{g(u)} f_2\right).
\]
This amounts to a different dissipation structure while preserving the same energy as in \eqref{Torabi_energy}. 
Previous phase-field approaches are mainly based on the conservation of the phase variable $u$, which does not give a good approximation to the volume enclosed by the interface. As a result, such phase-field models may give rise to undesirable effects---most notably, spontaneous shrinkage of an interface\cite{2017Spontaneous}: especially when initialized in a steady-state configuration, the interface can gradually diminishes over time. To overcome this difficulty, we propose a new type of phase-field model with improved volume conservation by introducing a new conserved quantity $Q(u)$ that enhances the approximation to the enclosed volume. It can be shown that the newly proposed model for anisotropic Cahn-Hilliard equation with improved conservation (ACH-IC) can achieve ``second-order accurate real volume conservation'', i.e., preserving the enclosed area (or volume, in higher dimensions) of the region $\{x : u(x)>0\}$ at a second order in $\varepsilon$. This theoretical enhancement is also numerically validated in both isotropic and anisotropic cases, demonstrating the superior performance of the ACH-IC model over earlier methods.

It is worth mentioning that in the case of isotropic energy functional and a specific quadratic mobility, the ACH-IC model recovers the phase-field model proposed by Bretin\cite{bretin2022approximation}. However, 
the framework in this paper provides a clear intuition and deep physical understanding on the model improvement. The technique of introducing an improved conserved quantity is not limited to the anisotropic Cahn-Hilliard equation. In fact, by using a properly modified conservation equation, our work offers a novel framework for the phase-field simulation of interface problems with improved volume conservation, leading to much better approximations to the corresponding sharp-interface dynamics. This framework is clearly demonstrated from the OVP perspective: with different energy functionals and different mobilities or dissipation functionals utilized, variational derivation can be developed in the constraint of conservation equations to obtain phase-field models for a wide variety of problems. For instance, one can introduce elastic or interfacial energy terms into the energy functional to model phenomena such as solid-state dewetting\cite{jiang2012phase,Boccardo2022}. Additionally, by incorporating terms dependent on $\nabla u$ into the mobility $M(u)$, it is straightforward to model anisotropic surface diffusion, with a directionally dependent mobility\cite{taylor1994linking,garcke2023diffuse}.

The rest of this paper are structured as follows. In Section \ref{chapter2}, we derive the proposed model via the OVP and the gradient flow. In Section \ref{chapter3}, we employ matched asymptotic expansions to derive the second-order approximation solution, recover the pure surface diffusion, and establish the second-order real volume conservation. In Section \ref{chapter4}, we present a series of numerical experiments to validate the better approximation properties of our model. Finally, concluding remarks are made in Section \ref{chapter5}.

\section{Variational derivation of the ACH-IC model}\label{chapter2}
\subsection{Onsager's variational principle}
The Onsager variational principle (OVP), first formulated by Onsager\cite{onsager1931reciprocal1,onsager1931reciprocal2}, provides a powerful framework for describing non-equilibrium processes within the linear response regime of irreversible thermodynamics. This principle has been widely utilized in diverse fields, including fluid dynamics\cite{qian2006variational,doi2019application,zhang2022variational}, soft matter physics\cite{doi2011onsager,wang2021onsager}, and interfacial problems such as solid-state dewetting\cite{jiang2019application,zhao2024dynamics}. We first present a brief review of this principle.

Consider an isothermal system not far from equilibrium, with state variables  
\[  
\beta(t) = (\beta_1(t), \beta_2(t), \ldots, \beta_n(t)),  
\]  
and their rates of change  
\[  
\dot{\beta}(t) = (\dot{\beta}_1(t), \dot{\beta}_2(t), \ldots, \dot{\beta}_n(t)),  
\]  
where $\dot{\beta}$ represents the time derivative of $\beta$.
The system's total free energy is denoted by \(W(\beta)\). OVP states that the dynamics of this system can be derived by minimizing the Rayleighian defined by
\[  
\mathcal{R}(\beta, \dot{\beta}) = \dot{W}(\beta, \dot{\beta}) + \Phi(\dot{\beta}, \dot{\beta}),  
\]  
where \(\dot{W}(\beta, \dot{\beta}) = \sum_{i=1}^n \frac{\partial W}{\partial \beta_i} \dot{\beta}_i\) represents the rate of change of the free energy, and \(\Phi(\dot{\beta}, \dot{\beta})\) is the dissipation function given by  
\[  
\Phi(\dot{\beta}, \dot{\beta}) = \frac{1}{2} \sum_{i=1}^n \sum_{j=1}^n \lambda_{ij}(\beta) \dot{\beta}_i \dot{\beta}_j,
\]  
in the linear response regime. Physically, $\Phi(\dot{\beta}, \dot{\beta})$ is half the rate of free energy dissipation.
Here, the friction coefficients \(\{\lambda_{ij}\}\) form a symmetric, positive definite matrix. Minimizing \(\mathcal{R}\) with respect to \(\dot{\beta}_i\) yields the dynamic equations  
\[  
-\frac{\partial W}{\partial \beta_i} = \sum_{j=1}^n \lambda_{ij} \dot{\beta}_j, \quad i = 1, 2, \ldots, n,  
\]  
which state the balance between reversible and dissipative forces. This balance ensures thermodynamic consistency and establishes a direct link between the system's free energy landscape and its non-equilibrium dynamics.
In this work, we shall utilize OVP to derive a phase-field model for anisotropic surface diffusion.

\subsection{A new conserved quantity and the model derivation}
In this subsection, we derive a variational model by considering a more general conserved quantity and employing the OVP.
In previous phase-field models\cite{gugenberger2008comparison,salvalaglio2015faceting,torabi2009new,backofen2019convexity} for anisotropic surface diffusion, the conservation constraint is typically expressed as the conservation of the phase-field variable $u$, i.e., 
\begin{equation}\label{integral_conservation}
	\int_\Omega u\dx=\text{constant},
\end{equation}
which approximately yields the conservation of the volume of the region enclosed by the zero-level set of the phase function, i.e.,
\[
\Omega^+(t) := \{ x : u(x,t) > 0 \}.
\]
It has been shown that\cite{2017Spontaneous,bretin2022approximation} in general the volume conservation property of phase-field models under the constraint \eqref{integral_conservation} is of first-order accuracy:
\[
\Omega^+(t) = \Omega^+(0) + \mathcal{O}(\varepsilon),
\]
which is a result of the approximation property of the conserved quantity \( u \) to a step function \( S(x) = \chi_{_{\Omega^+}}(x) - \chi_{_{\Omega \backslash \Omega^+}}(x) \) with a \textit{tanh} profile as
\[
u(x, t) = \tanh\left(\frac{\text{dist}(x, \partial\Omega^+(t))}{\sqrt{2}\varepsilon}\right) + \mathcal{O}(\varepsilon).
\]

To improve the quality of volume conservation, we introduce a more general conserved quantity \( Q(u) \) to replace \( u \). This is achieved by considering the following constraint
\begin{equation}\label{manifold_M}
	\int_\Omega Q(u) \, \mathrm{d}x = \text{constant}.
\end{equation}
We require that the conserved variable \( Q(u) \) approximates the step function \( S(x) \) more closely than \( u \) when \( u \) is a \textit{tanh} profile. There is flexibility of selecting $Q(u)$, for instance, $Q(u)$ can be defined by 
\[
Q(u) = \frac{\int_{0}^{u} (1 - z^2)^k \, \mathrm{d}z}{\int_{0}^1 (1 - z^2)^k \, \mathrm{d}z},
\]
where \( k \geqslant 1 \) is an integer. This quantity converges to the step function \( S(x) \) more rapidly as \( \varepsilon \) decreases when \( u = \tanh\left(\frac{\text{dist}(x, \partial \Omega^+)}{\sqrt{2} \varepsilon}\right) \). In fact, it can be observed that as \( \varepsilon \to 0 \), for \( x \notin \partial \Omega^+ \), \( u(x) \to \pm 1 \) and \( Q(u(x)) \to \pm 1 \). Applying L'Hopital's Rule, we obtain:
\[
\lim_{\varepsilon \to 0} \frac{Q(u) - S}{u - S} = \lim_{\varepsilon \to 0} \frac{Q'(u) \frac{\partial u}{\partial \varepsilon}}{\frac{\partial u}{\partial \varepsilon}} = \lim_{\varepsilon \to 0} \frac{(1 - u^2)^k}{\int_0^1 (1 - z^2)^k \, \mathrm{d}z} = 0, \quad \text{for any } x \notin \partial \Omega^+.
\]

To obtain an intuitive understanding of the approximation behavior of \( Q \), we present a numerical comparison of \( Q \) for different values of \( k = 1, 2, 3 \) in Fig.~\ref{Q_fig}. This figure visually demonstrates how the approximation to the step function is improved by increasing \( k \), showing the effectiveness of \( Q \) in capturing the sharp transition between phases. 
\begin{figure}[htb]
	\centering
	\includegraphics[trim=0.2cm 0.0cm 0.2cm 0.2cm, clip,width=0.7\linewidth]{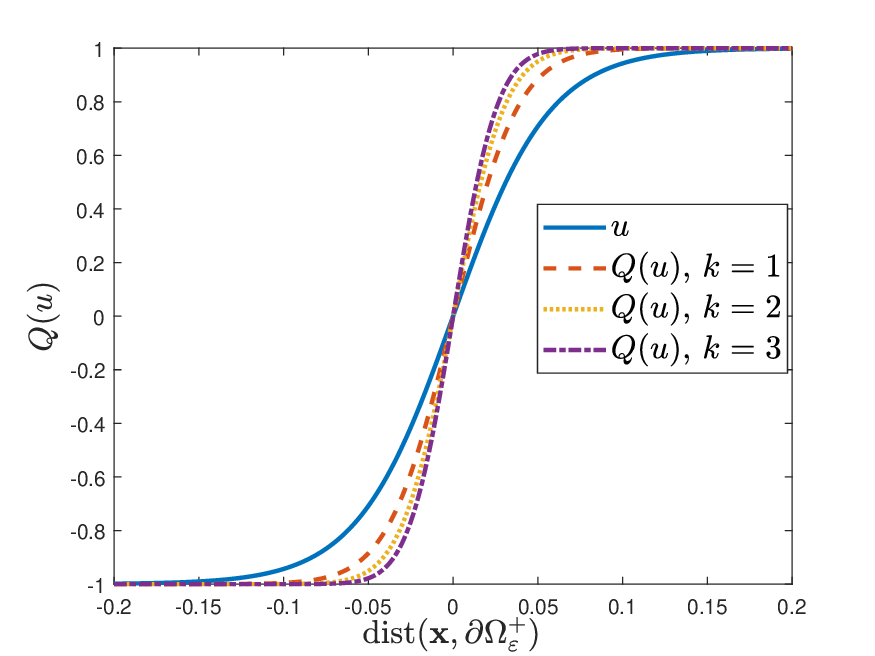}
	\caption{$Q(u)$ for different $k$, when $u=\tanh\left(\frac{\text{dist}(x,\partial\Omega^+)}{\sqrt{2}\varepsilon}\right)$ with $\varepsilon=0.05$.}
	\label{Q_fig}
\end{figure}

Given the new conserved quantity \( Q \), the conservation law can be expressed as
\begin{equation}\label{conservation_law}
	\partial_t Q+\nabla \cdot \mathbf{J}=0, \quad \Longleftrightarrow \quad u_t=-\frac{1}{Q'(u)} \nabla \cdot \mathbf{J},
\end{equation}
where \( \mathbf{J} \) represents the concentration flux.
Before we can apply the OVP under the conservation law \eqref{conservation_law}, we need to express the free energy and the dissipation functionals. In this work, we always use the Torabi-type free energy functional \eqref{Torabi_energy}, and then the rate of change of the total free energy $E_t$ can be calculated as
\begin{equation}
	\label{changeE}
	E_t(u)=\int_\Omega \frac{\delta E(u)}{\delta u}u_t\dx
	=\int_\Omega \nabla\left(\frac{1}{Q'(u)}\frac{\delta E(u)}{\delta u}\right)\cdot \mathbf{J}\dx,
\end{equation}
where we have assumed the no-flux boundary condition $\mathbf{J}\cdot\mathbf{\hat{n}}=0$ on $\partial\Omega$. 
The dissipation functional $\Phi$ is proportional to the square of
the concentration flux $\mathbf{J}$ and can be written as
\begin{equation}
	\label{dissipation}
	\Phi:=\int_\Omega \frac{\varepsilon}{2M(u)}|\mathbf{J}|^2\dx,
\end{equation} 
The function \( 1/M(u) \) plays a role of friction coefficient, measuring the hardness for atoms to move in the bulk phases (where \( u = \pm1 \)). For example, when we take $M(u)=(1-u^2)^l$ with $l \geqslant 1$, the dissipation functional $\Phi$ is consistent with the ones in  Refs.~\cite{qian2006variational,jiang2019application}. 
High-order degeneracy in \( M(u) \) (i.e., \( l \geqslant 2 \)) was typically required to achieve better performance in the simulation\cite{lee2015degenerate}. However, as we will see in our discussions and numerical results, the condition
\begin{equation}
\label{condition_mobility}
    1\leqslant k\leqslant l\leqslant2k+1,
\end{equation}
is sufficient to achieve accurate approximations in our proposed model, which allows more flexible choices in selecting $M(u)$.

Combining the time derivative of the free energy \eqref{changeE} and the dissipation functional \eqref{dissipation}, the Rayleighian is given by:
\begin{equation}
	\label{Raylei}
	\mathcal{R}:=E_t+\Phi=\int_\Omega \nabla\left(\frac{1}{Q'(u)}\frac{\delta E(u)}{\delta u}\right)\cdot \mathbf{J}\dx+\int_\Omega \frac{\varepsilon}{2M(u)}|\mathbf{J}|^2\dx.
\end{equation}
To find the constitutive equation for the concentration flux \( \mathbf{J} \), we minimize the Rayleighian with respect to \( \mathbf{J} \). The minimum is achieved when
\[
\mathbf{J}=-\frac{1}{\varepsilon}M(u)\nabla\left(\frac{1}{Q'(u)}\frac{\delta E(u)}{\delta u}\right).
\]
Combining this expression with the conservation law \eqref{conservation_law}, we derive the evolution equation for \( u \):
\begin{equation}\label{eq_NMN}
	\begin{cases}
		\begin{aligned}
			& u_t= \varepsilon^{-1}N(u)\nabla \cdot(M(u) \nabla( N(u)\mu)), \\
			&\mu=\frac{\delta E}{\delta u}=\varepsilon^{-1}\gamma(\bhatn) F'(u)-\epsilon \nabla \cdot \mathbf{m}(u),\\
			&\mathbf{m}(u)=\gamma(\bhatn)\nabla u+\frac{(I-\bhatn\bhatn^T)\nabla_{\bhatn}\gamma(\bhatn)}{\sqrt{|\nabla u|^2+\varepsilon^2}}\left( \frac{1}{2}|\nabla u|^2+\frac{1}{\varepsilon^2}F(u)\right),
		\end{aligned}
	\end{cases}
\end{equation}
where $N(u)=1/Q'(u)$, $M(u)=(1-u^2)^l$, and the regularization technique has been used in approximating $|\nabla u|$ to avoid singularity in $\mathbf{m}(u)$.
In this paper, we always take $Q'(u)=\frac{ (1 - u^2)^k }{\int_0^1 (1 - z^2)^k\mathrm{d}z}$. We would refer to the anisotropic Cahn-Hilliard equation with improved conservation \eqref{eq_NMN} as \textbf{ACH-IC}. It is also noteworthy that many existing models can be obtained as special cases of ACH-IC. When $k=1$ and $l=2$ and isotropic surface tension is used (i.e. $\gamma$ is independent of $\bhatn$), the phase-field model in Ref.~\cite{bretin2022approximation} is recovered from the ACH-IC model. Alternatively, if we take $k=l=0$, we obtain the classical Cahn-Hilliard equation with constant mobility.

Physically, it is important to highlight the distinct roles played by \( M(u) \) and \( N(u) \) in the proposed model. \( M(u) \) controls the diffusion rate and confines diffusion mainly to the interfacial region. On the other hand, \( N(u) \) improves the solution's proximity to a step function, ensuring better real volume conservation. Intuitively, although the first equation in \eqref{eq_NMN} is no longer a conservation equation over the whole domain, the singularity $u=\pm1$ in $N(u)=1/Q'(u)$ forces zero flux from the interface region into the bulk. This will be made clear by the asymptotic analysis and the numerical demonstration in later sections. The combination of \( M(u) \) and \( N(u) \) serve a more accurate description of interface evolution through robust modeling of anisotropic surface diffusion.


A potential issue, however, arises when \( u \) approaches the singularities of \( N(u) \) (i.e., \( u \in \{-1, 1\} \)). Although the ideal solution is expected to maintain a smooth \textit{tanh} profile, the model lacks a strict maximum principle, which allows \( u \) to attain values of \(-1\) or \( 1 \) in practice. To address this issue, we regularize $N(u)$ as  
\[  
\tilde{N}(u) = \frac{\int_0^1(1-z^2)^k\mathrm{d}z}{|1-u^2|^k+\varepsilon^{2k}},  
\]  
which effectively avoids singularities. 
Importantly, this adjustment does not affect the asymptotic behavior of the model. As demonstrated in the analysis part in Section \ref{chapter3} (also see Ref.~\cite{bretin2022approximation}), the additional terms in the expansion arise only at third or higher orders and have no impact on the leading-order normal velocity. For simplicity, we therefore retain the original definitions  \( N(u) = 1/Q'(u) \) in our analysis, under the assumption that \( u \neq -1, 1 \) in the expansions.

At the end of this subsection, we provide a mathematical derivation of the ACH-IC. Starting from the standard \( H^{-1} \) gradient flow with respect to the conserved quantity \( Q \), we can express the conservation law as:
\[
Q_t = \nabla \cdot \left(M(u)\nabla \mu_Q\right),
\]
where \( M(u) \) is the mobility and \( \mu_Q \) represents the chemical potential defined by
\[
\mu_Q = \frac{\delta E}{\delta Q} = \frac{\delta E}{\delta u} \cdot \frac{\mathrm{d}u}{\mathrm{d}Q} = \frac{1}{Q'(u)} \frac{\delta E}{\delta u}=N(u)\mu.
\]
Since $Q_t=Q'(u)u_t$, we arrive at the ACH-IC. 
This shows that by considering a more general conserved quantity \( Q(u) \), new model with improved conservation can be naturally derived from the standard gradient flow formulation.


\subsection{Properties of the ACH-IC}
As a result of the variational structure,
the ACH-IC model enjoys the energy decaying property: assuming periodic or natural boundary conditions on \( \partial\Omega \), with respect to the classic anisotropic energy \eqref{Torabi_energy}, we have 
\[
\begin{aligned}
	\frac{\mathrm{d}}{\dt} E(u)  &=\int_\Omega \frac{\delta E}{\delta u}u_t \dx=\varepsilon^{-1}\int_\Omega \mu N(u)\nabla\cdot\left(M(u)\nabla\left(N(u)\mu\right)\right)\dx\\
	&=-\varepsilon^{-1}\int_\Omega  M(u)\left|\nabla(N(u)\mu)\right|^2 \dx\leqslant 0.
\end{aligned} 
\]
Moreover, the introduction of the new conserved quantity still preserves the asymptotic properties:
\begin{enumerate}
	\item  Let  $\Omega^+_{\varepsilon}(t) = \left\{ u_{\varepsilon}(\cdot, t) > 0 \right\}$. The solution $u_{\varepsilon}$ to \eqref{eq_NMN} has the following second-order uniform asymptotic approximation:
	\begin{equation}\label{second_u}
		u_{\varepsilon} = u_{unif,0}+\varepsilon u_{unif,1}+\mathcal{O}(\varepsilon^2),
	\end{equation}
	where the leading-order approximation $ u_{unif,0}$ is a \textit{tanh} profile given by
	\[
	u_{unif,0}= \tanh\left( \frac{\text{dist}\left(x, \partial \Omega_\varepsilon^+(t)\right)}{\sqrt{2}\varepsilon} \right),
	\]
	and there exists a constant $C$ such that the first-order approximation is uniformly bounded by $C$, i.e., $ |u_{unif,1}|<C$. Moreover, $ u_{unif,1}\to0$ as $x$ is away from the interface $\partial\Omega_\varepsilon^+$.
	
	\item   The normal velocity of the interface follows the law of anisotropic surface diffusion:
	\begin{equation}\label{pure_surfacediffusion}
		V_{\varepsilon} = C_l\Delta_{s} \left( \nabla_s \cdot \bxi(\nu) \right) + \mathcal{O}(\varepsilon),
	\end{equation}
	where $C_l$ is a constant that is related to the mobility $M(u)=(1-u^2)^l$, defined by
	\[
	C_l=\frac{2}{3}\int_{0}^{1}(1-z^2)^{l-1}\mathrm{d}z.
	\]
	
	\item When $\partial \Omega^+_{\varepsilon}(t)$ is finite and possesses $\mathcal{C}^{2}$ regularity at all time, the real volume is conserved up to second order:
	\begin{equation}\label{second_volume}
		\left| \Omega^+_{\varepsilon}(t) \right| = \left| \Omega^+_{\varepsilon}(0) \right| + \mathcal{O}\left( \varepsilon^{2} \right).
	\end{equation}
	
\end{enumerate}
These properties collectively provide the theoretical guarantee for the improvement of the ACH-IC model in its approximation ability to the sharp-interface anisotropic surface diffusion flow, with second-order accuracy in real volume conservation. Compared to existing phase-field approaches, the ACH-IC model mitigates the significant limitation of spontaneous shrinkage by achieving more accurate interface approximation and better physical consistency, particularly in the conservation of the enclosed domain's volume. In the following sections, we will show these properties by means of matched asymptotic analysis and illustrate their practical advantages through numerical experiments.

Before we close this subsection, we give some discussions about the strongly anisotropic cases (i.e., $\hat{\gamma}$ is non-convex). This problem would be ill-posed due to the
anti-diffusion in $\nabla\cdot\mathbf{m}$\cite{wise2007solving}. In order to tackle the ill-posedness of the dynamic problem arising from the non-convex energy density, a commonly used approach is to introduce a high-order Willmore regularization term 
$\frac{\beta}{2}\int_\Omega\frac{1}{\varepsilon^3}(F'(u)-\varepsilon^2\Delta u)^2\dx$\cite{chen2013efficient,makki2016existence} into the energy functional \eqref{Torabi_energy}, where $\beta$ controls the regularization strength. This results in a modified chemical potential given by
\[
\mu=\frac{1}{\varepsilon}\gamma(\bhatn) F'(u)-\epsilon \nabla \cdot \mathbf{m}(u)+\beta(\frac{1}{\varepsilon^2}F''(u)\omega-\Delta\omega), \quad\omega=\frac{1}{\varepsilon}F'(u)-\varepsilon\Delta u.
\]
In the strongly anisotropic case, the ACH-IC retains all the previously mentioned properties, except for the normal velocity. This is because the Willmore regularization terms dominate the leading-order solution, resulting in a modified, regularized normal velocity. The analysis of these properties follows similar arguments as those in the unregularized case (also see Ref.~\cite{torabi2009new}) and will be omitted in this work for brevity.

\section{Analysis of the ACH-IC model}\label{chapter3}
We first employ the matched asymptotic analysis to obtain the uniform asymptotic approximation \eqref{second_u} and recover the sharp-interface limit of the ACH-IC \eqref{pure_surfacediffusion}. 
Our asymptotic analysis will follow similar techniques in the literature for degenerate Cahn-Hilliard equations\cite{cahn1996cahn,bretin2022approximation}, and for the anisotropic case\cite{dziwnik2017anisotropic,garcke2023diffuse}.
Since the normal velocity \eqref{geometric_flow} is represented by the Cahn-Hoffman vector ${\bm\xi}  $, we need to establish some useful equalities. For any vector $ \mathbf{p}\in\mathbb{R}_*^d $ and $ \lambda>0 $, we have $ \hat{\gamma}(\lambda\mathbf{p})=\lambda\hat{\gamma}(\mathbf{p}) $. Differentiating this equality with respect to $ \mathbf{p} $ and $ \lambda $, we can get the following equalities:
\begin{equation}\label{equ_xi}
	{\bm\xi}(\lambda\mathbf{p})={\bm\xi}(\mathbf{p}),\quad\nabla{\bm\xi}(\lambda\mathbf{p})=\frac{1}{\lambda}\nabla{\bm\xi}(\mathbf{p}),\quad{\bm\xi}(\mathbf{p})\cdot\mathbf{p}=\hat{\gamma}(\mathbf{p}).
\end{equation}
Straightforward calculation gives the representation in terms of the Cahn-Hoffman vector:
\[
\begin{aligned}
	(I-\bhatn\bhatn^T)\nabla_{\bhatn}\gamma(\bhatn)=(I-\bhatn\bhatn^T){\bm\xi}(\bhatn)={\bm\xi}(\bhatn)-\bhatn\hat{\gamma}(\bhatn).
\end{aligned}
\]
Then we can reformulate the ACH-IC model \eqref{eq_NMN} into
\begin{numcases}{}
	\partial_t u=\frac{1}{\varepsilon} N \nabla \cdot(M \nabla (N\mu)),\label{eq1} \\
	\mu=\frac{1}{\varepsilon}\hat{\gamma}\left(\bhatn\right) F'(u)-\epsilon \nabla \cdot \mathbf{m}(u),\label{eq2}\\
	\mathbf{m}(u)=\hat{\gamma}\left(\bhatn\right)\nabla u+\frac{\left({\bm\xi}(\bhatn)-\bhatn\hat{\gamma}(\bhatn)\right)}{\sqrt{|\nabla u|^2+\varepsilon^2}}\left( \frac{1}{2}|\nabla u|^2+\frac{1}{\varepsilon^2}F(u)\right).\label{eq3}
\end{numcases}

As $\varepsilon\to0$, there is a transition layer near the zero-level set $\Gamma_\varepsilon(t):=\{\mathbf{x}\in\Omega:u_\varepsilon(\mathbf{x},t)=0\}$, where the variable $u$ undergoes significant changes. In this section, we assume that $\Gamma_\varepsilon$ is a smooth closed surface and divide the domain $\Omega$ into two regions: $\Omega^+_\varepsilon(t):=\{\mathbf{x}\in\Omega:u_\varepsilon(\mathbf{x},t)>0\}$ and $\Omega^-_\varepsilon(t):=\{\mathbf{x}\in\Omega:u_\varepsilon(\mathbf{x},t)<0\}$.

\subsection{Outer expansion}

In the outer region, we can expand the following variables: 
\[
\begin{aligned}
	&u=u_0+\varepsilon u_1+\varepsilon^2 u_2\cdots,\quad
	&\mu=\varepsilon ^{-1}\mu_{-1}+\mu_0+\varepsilon\mu_1+\cdots,\quad
	&\bhatn=\bhatn_0+\varepsilon\bhatn_1+\cdots.
\end{aligned}
\]
As a consequence, 
\[
F'(u)=F'(u_0)+\varepsilon F''(u_0)u_1+\mathcal{O}(\varepsilon^2),\quad
	\hat{\gamma}(\bhatn)=\hat{\gamma}(\bhatn_0)+\varepsilon\bxi(\bhatn_0)\bhatn_1+\mathcal{O}(\varepsilon^2).
\]

To further expand the elliptic operator $N(u)\nabla\cdot\left(M(u)\nabla N(u)\right)$, we need to determine the order of $M(u)$ and $N(u)$. We first assume that $u_0=\pm1$ at which the energy is minimized in the bulks. This will be validated a posteriori to be a consistent assumption. Let $w$ be the minimum integer such that $u_w\neq0$. Straightforward calculation shows that the leading order terms of $M(u)$ and $N(u)$ are 
\[
\varepsilon^{lw}\frac{M^{(l)}(\pm1)}{l!}u_w^{lw}, \quad \varepsilon^{-kw}\frac{k!}{Q^{(k+1)}(\pm1)}u_w^{-kw}.
\]
Hence,
\[
N(u)\nabla\cdot\left(M(u)\nabla N(u)\right)=\varepsilon^{w(l-2k)}\mathcal{A}+o(\varepsilon^{w(l-2k)}),
\]
where $\mathcal{A}$ is an elliptic operator defined by
\[
\mathcal{A}=\frac{(k!)^2M^{(l)}(\pm1)}{l!(Q^{(k+1)}(\pm1))^2}\frac{1}{u_w^{kw}}\nabla\cdot\left(u_w^{lw}\nabla(\frac{1}{u_w^{kw}})\right).
\]


Substituting these expansions into \eqref{eq1}--\eqref{eq3}, we can obtain the outer equation at each order. In particular, under the condition \eqref{condition_mobility} (or $l-2k<1+1/w$), we have
\begin{equation}
	\label{out_leading}
0=\mathcal{A}\mu_{-1},\quad
		\mu_{-1}=\hat{\gamma}(\bhatn_0)F'(u_0).
\end{equation}
By the ellipticity of the operator $\mathcal{A}$, we find that $(u_0, \mu_{-1})=(\pm1,0)$ composes a solution to the leading order system \eqref{out_leading}. It should be noted that the common cases $(k,l)=(1,1)$ and $(k,l)=(1,2)$ satisfy the assumption, and thus our discussion also applies in such special cases.

Continuing our discussion under the same assumption ($l-2k<1+1/w$), at the first order  we have
\begin{equation}
		\label{out_second}
	0=\mathcal{A}\mu_0,\quad
	\mu_0=\hat{\gamma}(\bhatn_0)F''(u_0)u_1.
\end{equation}
The ellipticity of the operator $\mathcal{A}$ gives that $\mu_{0}$ is constant (and so is $u_1$). Indeed, $(u_1, \mu_{0})=(0,0)$ as we will see after matching with the inner solution.


\subsection{Inner expansion}
Let $d(\mathbf{x},t) = \mathrm{dist}(\mathbf{x}, \Gamma_\varepsilon(t))$ be a signed distance function between $\mathbf{x}$ and the surface $\Gamma_\varepsilon(t)$. When $d(\mathbf{x},t)>0$, $\mathbf{x}$ belongs to $\Omega^+_\varepsilon$. We introduce a rescaled inner variable $ \rho(\mathbf{x},t):=d(\mathbf{x},t)/\varepsilon $ along the normal direction to $\Gamma_\varepsilon(t)$, and take a local normal-tangent coordinate system with respect to the interface $\Gamma_\varepsilon(t)$, i.e.,
\[
\mathbf{x}=\mathbf{r}(\mathbf{s}(\mathbf{x},t),t)+\varepsilon\rho(\mathbf{x},t)\mathbf{\nu}(\mathbf{s}(\mathbf{x},t),t),
\]
where $ \mathbf{r}(\mathbf{s},t) $ is a parametrization of $\Gamma_\varepsilon(t)$, $ \mathbf{s}=(s_1,s_2,\cdots,s_{d-1}) $ is arc length, and $ \mathbf{\nu}(\mathbf{s},t) $ is the outward unit normal vector to $\Gamma_\varepsilon(t)$ at $ \mathbf{r}(\mathbf{s},t) $ pointing into $\Omega^+_\varepsilon$. Let 
\[
\mathbf{t}_i=\frac{\partial \mathbf{r}}{\partial s_i},\quad\quad \text{for }i=1,2,\cdots,d-1.
\]
Then $ \{\mathbf{t}_1,\mathbf{t}_2,\cdots,\mathbf{t}_{d-1}\} $ forms an orthonormal basis of the tangent space of $\Gamma_\varepsilon(t)$ at $ \mathbf{r}(\mathbf{s},t) $, and we have
\[
\nabla s_i=-\kappa_i\mathbf{\nu},\quad\quad \text{for }i=1,2,\cdots,d-1,
\]
where $\kappa_i$ are principal curvatures. As in Ref.~\cite{dai2014coarsening}, we can derive the identities
\[
\nabla d=\varepsilon \nabla \rho=\mathbf{\nu}(\mathbf{s}, t),\quad\nabla s_i=\frac{1}{1+\varepsilon \rho \kappa_i} \mathbf{t}_i(\mathbf{s}, t), \quad\quad \text{for }i=1,2,\cdots,d-1.
\]

Any scalar function $ a(\mathbf{x},t) $ and vector field $ \mathbf{b}(\mathbf{x},t) $ can be expressed in the new coordinate system as $ a(\mathbf{x},t)=A(\mathbf{s},\rho,t) $ and $ \mathbf{b}(\mathbf{x},t)=\mathbf{B}(\mathbf{s},\rho,t) $ respectively. The following useful identities can be easily derived (see Ref.~\cite{garcke2023diffuse}):
\begin{equation}\label{coordinates_change}
	\begin{aligned}
		\partial_t a&
		=-\varepsilon^{-1}V\partial_\rho A+\partial_t^{\Gamma} A,\\
		\nabla a&
		=\varepsilon^{-1} \mathbf{\nu}\partial_\rho A+\sum_{i=1}^{d-1}\frac{1}{1+\varepsilon\rho\kappa_i}\mathbf{t}_i\partial_{s_i}A=\varepsilon^{-1} \mathbf{\nu}\partial_\rho A+\nabla_{\mathbf{s}} A+\mathcal{O}(\varepsilon), \\
		\nabla \cdot \mathbf{b}&
		=\varepsilon^{-1} \mathbf{\nu} \cdot \partial_\rho \mathbf{B}+\sum_{i=1}^{d-1}\frac{1}{1+\varepsilon\rho\kappa_i}\mathbf{t}_i\cdot\partial_{s_i}\mathbf{B} =\varepsilon^{-1} \mathbf{\nu} \cdot \partial_\rho \mathbf{B}+\nabla_{\mathbf{s}} \cdot \mathbf{B}+O(\varepsilon)
	\end{aligned}
\end{equation}
where $ V $ is the velocity of $\Gamma_\varepsilon(t)$ in the normal direction, i.e., $V=-\partial_t d=-\varepsilon \partial_t \rho$, and
$$
\partial_t^{\Gamma} A:=\partial_t A+\sum_{i=1}^{d-1} \partial_{s_i} A \partial_t s_i,
$$
resembles a ``material derivative'' along the surface.
Moreover, $\nabla_{\mathbf{s}}=\sum_{i=1}^{d-1} \mathbf{t}_i \partial_{s_i}$ denotes the surface gradient operator on $\Gamma_\varepsilon(t)$.

Now, we are ready for the inner expansion. Let $ u(\mathbf{x},t) $ and $ \mu(\mathbf{x},t) $ be expanded as 
\begin{align*}
	&u(\mathbf{x},t)=U(\mathbf{s},\rho,t)=U_0+\varepsilon U_1+\cdots,\\
	&\mu(\mathbf{x},t)=\bar{\mu}(\mathbf{s},\rho,t)=\varepsilon^{-1}\bar{\mu}_{-1}+\bar{\mu}_0+\cdots.
\end{align*}
Using Taylor expansion, we have
\[
F(U)=F(U_0)+\varepsilon F'(U_0)U_1+\varepsilon^2 \left(F'(U_0)U_2+\frac{F''(U_0)}{2}U_1^2 \right)+O(\varepsilon^3).
\]
Similarly, we can expand $M(U)$ and $N(U)$. For simplicity, we denote the corresponding expansion by 
\[
\begin{aligned}
	M(U)=\bar{M}=M_0+\varepsilon M_1+\varepsilon^2 M_2+\cdots,\quad
	N(U)=\bar{N}=N_0+\varepsilon N_1+\varepsilon^2 N_2+\cdots,
\end{aligned}	
\]
where $M_0=M(U_0)$ and $N_0=N(U_0)=1/Q'(U_0)$.

Using \eqref{coordinates_change}, \eqref{eq1} can be reformulated as: 
\begin{equation}\label{EQ1_left}
	\begin{aligned}
    &-\varepsilon^{-1} V\partial_\rho U+\partial_t^{\Gamma} U\\
        =&\varepsilon^{-3}\bar{N}\partial_\rho\left(\bar{M}\partial_{\rho}(\bar{N}\bar{\mu})\right)+\varepsilon^{-2}\bar{N}\sum_{i=1}^{d-1}\frac{1}{1+\varepsilon\rho\kappa_i}\mathbf{t}_i\cdot\partial_{s_i}\left(\bar{M}\mathbf{\nu}\partial_{\rho}(\bar{N}\bar{\mu})\right)\\
		&+\varepsilon^{-1}\bar{N}\sum_{i=1}^{d-1}\frac{1}{1+\varepsilon\rho\kappa_i}\mathbf{t}_i\cdot\partial_{s_i}\left(\bar{M}\sum_{j=1}^{d-1}\frac{1}{1+\varepsilon\rho\kappa_j}\mathbf{t}_j\partial_{\rho}(\bar{N}\bar{\mu})\right).
	\end{aligned}
\end{equation}
Next, from complex computation, we can expand \eqref{eq2} as
\begin{equation}
\begin{aligned}
\label{EQ2}
\bar{\mu}=&\varepsilon^{-1}(\gamma_0F'(U_0)-\gamma_0\partial_{\rho\rho}U_0)+\left(\gamma_0F''(U_0)U_1+\gamma_1\frac{F'(U_0)}{\partial_{\rho}U_0}-\gamma_0\partial_{\rho\rho}U_1-\partial_{\rho}\gamma_1\right)\\
&-\partial_{\rho}\left(\left(\mathbf{\nu}\cdot\nabla{\bm\xi}(\mathbf{\nu})\nabla_sU_0-\gamma_1\right)(\frac{1}{2}+\frac{F(U_0)}{(\partial_{\rho}U_0)^2})	\right)	\\
&-\nabla_{\mathbf{s}}\cdot\left(\gamma_0\partial_{\rho}U_0\mathbf{\nu}(\frac{1}{2}-\frac{F(U_0)}{(\partial_{\rho}U_0)^2})+\partial_{\rho}U_0{\bm\xi}(\mathbf{\nu})(\frac{1}{2}+\frac{F(U_0)}{(\partial_{\rho}U_0)^2})	\right)	+\mathcal{O}(\varepsilon).
\end{aligned}
\end{equation}
where $ \gamma_0=\hat{\gamma}(\mathbf{\nu}) $ and $\gamma_1={\bm\xi}(\mathbf{\nu})\cdot\nabla_sU_0$. 
Details for this derivation are given in Appendix \ref{expansion_second}. 
In fact,  the $O(1)$ terms in the brackets can be further simplified by using the matching results.
We will solve for the inner solutions at each order and perform matching with the outer solutions. 

\subsection{Matching}
Before we match outer solutions and inner solutions at each order, we match the flux $\mathbf{j}:=M\nabla(N\mu)$ between the inner region and the outer region, which helps us determine the inner solution and the outer solution. 

In the outer region, the flux can be expanded as
\begin{equation}\label{flux_out}
\mathbf{j}=\varepsilon^{w(l-k)}\frac{k!M^{(l)}(\pm1)}{l!Q^{(k+1)}(\pm1)}u_w^{lw}\nabla(\frac{1}{u_w^{kw}}\mu_0)+o(\varepsilon^{w(l-k)})
\end{equation}
where we have used the outer solutions $\mu_{-1}=0$ to eliminate the lowest order terms.

For brevity of presentation, we define the following notations 
\[
(N\bar{\mu})_k=\sum_{i\geqslant0,~j\geqslant-1,i+j=k} N_i\bar{\mu}_j
\]
In the inner region,  the normal component $ \mathbf{J}_\nu={\bm\nu}\cdot (M(U)\nabla(N(U)\bar{\mu})) $ can be expanded as:
\begin{equation}\label{flux_in}
	\begin{aligned}
		\mathbf{J}_\nu&=  \varepsilon^{-2}\left[M_0 \partial_\rho(N \bar{\mu})_{-1}\right] +\varepsilon^{-1}\left[M_1 \partial_\rho(N \bar{\mu})_{-1}+M_0 \partial_\rho(N \bar{\mu})_0\right] \\
		& +\left[M_2 \partial_\rho(N \bar{\mu})_{-1}+M_1 \partial_\rho(N \bar{\mu})_0+M_0 \partial_\rho(N \bar{\mu})_1\right] \\
		& +\varepsilon\left[M_3 \partial_\rho(N \bar{\mu})_{-1}+M_2 \partial_\rho(N \bar{\mu})_0+M_1 \partial_\rho(N \bar{\mu})_1+M_0 \partial_\rho(N \bar{\mu})_2\right]+\mathcal{O}\left(\varepsilon^2\right).
	\end{aligned}
\end{equation}
If $l\geqslant k$, the matching between the flux in the outer region and that in the inner region suggests that
\begin{align}\label{eq:matching_J}
	\lim_{\rho\to\pm\infty}M_0\partial_\rho(N \bar{\mu})_{-1}=\lim_{\rho\to\pm\infty}\left[M_1 \partial_\rho(N \bar{\mu})_{-1}+M_0 \partial_\rho(N \bar{\mu})_0\right]=0.
\end{align}
Moreover, the leading-order matching condition in $u$ and $U$ gives
\begin{equation}\label{eq:matching_u}
\lim_{\rho\to\pm\infty}U_0=u_0=\pm1.
\end{equation}

\textit{Leading order.} Combining \eqref{EQ1_left} and \eqref{EQ2}, at the leading order of inner system, we have
\begin{numcases}{}
	0=N_0\partial_{\rho}\left(M_0\partial_{\rho}(N_0\bar{\mu}_{-1})\right), \label{iasymp_1}\\
	\bar{\mu}_{-1}=\gamma_0F'(U_0)-\gamma_0\partial_{\rho\rho}U_0\label{iasymp_1_2}.
\end{numcases}	
This system together with the matching conditions \eqref{eq:matching_J} and \eqref{eq:matching_u} leads to leading-order inner solutions
\begin{equation}\label{eq:in_leading}
U_0(\rho)=\sigma(\rho):=\tanh\bigg(\frac{\rho}{\sqrt{2}}\bigg),\quad
\bar{\mu}_{-1}=0.
\end{equation}
In fact, \eqref{iasymp_1} implies $M_0\partial_{\rho}(N_0\bar{\mu}_{-1})=A_0(\mathbf{s}, t)$ which does not depend on $\rho$.  
The matching condition \eqref{eq:matching_J} indicates that $A_0=0$ and $ N_0\bar{\mu}_{-1}=B_0(\mathbf{s},t)$ independent of $\rho$. Then we multiply \eqref{iasymp_1_2} by $\partial_{\rho}U_0$ and integrate the result from $-\infty$ to $+\infty$ in $\rho$. The term on the left becomes
\begin{equation*}		 			\int_{-\infty}^{+\infty}\bar{\mu}_{-1}\partial_{\rho}U_0\drho=\int_{-\infty}^{+\infty}N_0\bar{\mu}_{-1}\frac{\partial_{\rho}U_0}{N_0}\drho=B_0\int_{-\infty}^{+\infty}Q'(U_0)\partial_{\rho}U_0\drho=2B_0,
\end{equation*}
where we have used the definition of $N_0$, the matching condition \eqref{eq:matching_u} and the property that $Q(\pm1)=\pm1$.
Direct calculation shows the term on the right vanishes:
\begin{equation*}	 			\int_{-\infty}^{+\infty}\gamma_0\partial_{\rho}U_0\left(F'(U_0)-\partial_{\rho\rho}U_0	\right)\drho=\gamma_0\left(F(U_0)-\frac{(\partial_{\rho}U_0)^2}{2}\right)\bigg|_{\rho=-\infty}^{+\infty}=0.
\end{equation*}
As a result, we have $ B_0=0 $ and thus $ \bar{\mu}_{-1}=0 $.
Since $\gamma_0\neq0$, \eqref{iasymp_1_2} becomes
\begin{equation}\label{eq_tanh}
		\partial_{\rho\rho}U_0-F'(U_0)=0.
\end{equation}
Solving this equation by aligning the zero-contour of $U_0$ with the interface $\Gamma_{\varepsilon}(t)$, i.e., $U_0(0,\mathbf{s},t)=0$, we can arrive at \eqref{eq:in_leading}. This also suggests that $U_0$ does not explicitly depend on $(\mathbf{s},t)$ and
\begin{equation}
\label{eq_tanh1}
(\partial_{\rho}U_0)^2=2F(U_0).
\end{equation}

\textit{First order.}
The first-order equation of the inner system is
\begin{numcases}{}
	0=N_0\partial_{\rho}\left(M_0\partial_{\rho}(N_0\bar{\mu}_0)\right), \label{iasymp_2}\\
	\bar{\mu}_0=\gamma_0F''(\sigma(\rho))U_1-\gamma_0\partial_{\rho\rho}U_1-\nabla_{\mathbf{s}}\cdot(\sigma'(\rho){\bm\xi}(\mathbf{\nu})).
	\label{iasymp_2_2}
\end{numcases}	
where we have used the fact that $U_0$ is independent of $\mathbf{s}$ (so that $\gamma_1=0$ according to \eqref{EQ2} and the definition therein) and \eqref{eq_tanh1}.
By the same argument as in the leading order,
\eqref{iasymp_2} and \eqref{eq:matching_J} implies that 
$N_0\bar{\mu}_0=B_1(\mathbf{s},t)$ independent of $\rho$. 
Let $\mathcal{L}:=\partial_{\rho\rho}-F''(\sigma(\rho))\mathbf{1}$ be a self-adjoint operator. \eqref{eq_tanh}  implies that $\mathcal{L}$ has a nontrivial null space spanned by $\sigma'(\rho)$.
Then a multiplication  by $U_0'(\rho)=\sigma'(\rho)$ on both sides of \eqref{iasymp_2_2} followed by an integration over $(-\infty,+\infty)$ leads to
\begin{equation*}\label{right_hand}
\begin{aligned}
2B_1=&\int_{-\infty}^{+\infty}N_0\bar{\mu}_0\mathrm{d}(Q(\sigma))=\int_{-\infty}^{+\infty}\bar{\mu}_0\sigma'(\rho)\drho\\
=&\int_{-\infty}^{+\infty}\left(-\gamma_0\mathcal{L}U_1-\nabla_s\cdot(\sigma'(\rho){\bm\xi}(\mathbf{\nu})) \right)\sigma'(\rho)\drho
		=-\frac{2\sqrt{2}}{3}\nabla_{\mathbf{s}}\cdot{\bm\xi}(\mathbf{\nu}).
	\end{aligned}
\end{equation*}
As a result,
\begin{equation}
\label{eq_n0mu0}
     N_0\bar{\mu}_0=-\frac{\sqrt{2}}{3}\nabla_{\mathbf{s}}\cdot{\bm\xi}(\mathbf{\nu}).
\end{equation}
Substituting it into \eqref{iasymp_2_2}, we have the equation for $U_1$:
\begin{equation}\label{eq_u1}
	\left\{\begin{array}{l}
		\mathcal{L}U_1=A(\rho)\sigma'(\rho),\\
		U_1(\rho=0,s,t)=0.
	\end{array}\right.
\end{equation}
where $A(\rho):=\Big(\frac{\sqrt{2}Q'(\sigma)}{3\sigma'(\rho)}-1\Big)\nabla_s\cdot{\bm\xi}(\mathbf{\nu})$. Due to \eqref{eq_tanh1}, we obtain that $
A(\rho)=\Big(\frac{(\sqrt{2})^{k+1}(\sigma'(\rho))^{k-1}}{3\int_0^1(1-z^2)^k\mathrm{d}z}-1\Big)\nabla_s\cdot{\bm\xi}(\mathbf{\nu})$ is a smooth bounded function of $\rho$ with finite limits as $\rho\to\pm\infty$ when $k\geqslant1$. In particular, when $k=1$, $A(\rho)\equiv0$. This implies that the source term of \eqref{eq_u1} has a degeneracy of at least the order of $\sigma'(\rho)$ at infinity. Since $\mathcal{L}(\sigma'(\rho))=0$, the solvability condition of \eqref{eq_u1} gives that
\begin{equation}
\label{eq:solvability}
	\int_{-\infty}^{+\infty}A(\rho)(\sigma'(\rho))^2\mathrm{d}\rho=0,
\end{equation}
which is satisfied by all $Q$ with $Q(\pm1)=\pm1$.

 Now we are ready to solve \eqref{eq_u1} for a smooth bounded solution that can be matched with the outer solution $u_1$ when $\rho\rightarrow\pm\infty$. Since both $\mathcal{L}$ and the source term of \eqref{eq_u1} are even in $\rho$, the solution $U_1$ has to be an even function. Multiplying \eqref{eq_u1} by $\sigma'(\rho)$ and integrating the result over $(0,\rho)$, we obtain
\begin{equation*}
    \partial_\rho U_1 \sigma'(\rho)-U_1\sigma''(\rho)=\int_{0}^\rho A(\zeta)(\sigma'(\zeta))^2\mathrm{d}\zeta,
\end{equation*}
where \eqref{eq_tanh} and the solvability condition \eqref{eq:solvability} have been used.
Then a multiplication by $1/(\sigma'(\rho))^2$ followed by an integration over $(0,\rho)$ gives 
\[
U_1(\rho)=\sigma'(\rho)\int_0^{\rho}\left(\frac{1}{(\sigma'(\eta))^2}\int^{\eta}_{0}A(\zeta)(\sigma'(\zeta))^2\mathrm{d}\zeta\right)\mathrm{d}\eta.
\]

In order to match with the outer solution, we also need to compute the limits of $U_1$ when $\rho\to\pm\infty$. In fact, by applying L'Hopital's rule, we have
\[
\begin{aligned}   \lim\limits_{\rho\to\pm\infty}U_1&=\lim\limits_{\rho\to\pm\infty}\frac{\frac{1}{(\sigma'(\rho))^2}\int^{\rho}_{-\infty}A(\zeta)(\sigma'(\zeta))^2\mathrm{d}\zeta}{-\frac{\sigma''(\rho)}{(\sigma'(\rho))^2}}=-\lim\limits_{\rho\to\pm\infty}\frac{A(\rho)(\sigma'(\rho))^2}{\sigma'''(\rho)}\\
	&=-\frac{1}{\sqrt{2}}\lim\limits_{\rho\to\infty}\frac{A(\rho)\text{sech}^2\left(\frac{\rho}{\sqrt{2}}\right)}{2\tanh^2\left(\frac{\rho}{\sqrt{2}}\right)-\text{sech}^2\left(\frac{\rho}{\sqrt{2}}\right)}=0,
\end{aligned}
\]
where in the last equality we have used the fact that
$A(\rho)$ is bounded. This implies that $U_1$ is bounded.
Using the matching between $u_1$ and $U_1$, the outer system \eqref{out_second} yields that $(u_1, \mu_{0})=(0,0)$.


To conclude, the outer solution is $u_{out}=\pm1+\mathcal{O}(\varepsilon^2)$ and the inner solution is $U=\sigma(\rho)+\varepsilon U_1+\mathcal{O}(\varepsilon^2)$.
Combining them, we obtain the second-order approximation \eqref{second_u},
\[
u_{\varepsilon} = u_{unif,0}+\varepsilon u_{unif,1}+\mathcal{O}(\varepsilon^2),
\]
where $u_{unif,0}=\sigma\Big( \frac{\text{dist}\left(x, \partial \Omega_\varepsilon^+(t)\right)}{\varepsilon}\Big) $, and there exists a constant $C$ such that $ |u_{unif,1}|\leqslant C$. In particular, when $k=1$, we have $U_1=0$ and thus $u_{unif,1}=0$.

\textit{Second order.}
Using the fact that $ \bar{\mu}_{-1}=0 $ and $ N_0\bar{\mu}_0 $ is independent of $\rho$, the second order of the inner system is
\[
0=N_0\partial_{\rho}\left(M_0\partial_{\rho}(N_0\bar{\mu}_1+N_1\bar{\mu}_0)\right),
\]
By the same argument as in the leading order, it follows that $N_0\bar{\mu}_1+N_1\bar{\mu}_0=B_2(\mathbf{s},t) $ independent of $\rho$.

\textit{Third order.}
By the same argument, the third order of the inner system is 
\begin{equation*}\label{iasymp_3}
	-V\partial_{\rho}U_0=N_0\partial_{\rho}\left(M_0\partial_{\rho}(N_2\bar{\mu}_0+N_1\bar{\mu}_1+N_0\bar{\mu}_2)	\right)+N_0\nabla_{\mathbf{s}}\cdot(M_0\nabla_s(N_0\bar{\mu}_0)).
\end{equation*} 
Multiplying it by $1/N_0$ on both sides, and integrating the result
over $(-\infty,\infty)$, we obtain that 
\begin{equation*}\label{eq_left_v}
\begin{aligned}
	-2V=&-V\int_{-\infty}^{+\infty}Q'(U_0)\partial_{\rho}U_0\mathrm{d}\rho\\
=&M_0\partial_{\rho}(N_2\bar{\mu}_0+N_1\bar{\mu}_1+N_0\bar{\mu}_2)	\bigg|_{-\infty}^{+\infty}+\int_{-\infty}^{+\infty}\nabla_{\mathbf{s}}\cdot(M_0\nabla_{\mathbf{s}}(N_0\bar{\mu}_0))\drho\\
		=&-\frac{\sqrt{2}}{3}\Delta_{\mathbf{s}}(\nabla_{\mathbf{s}}\cdot({\bm\xi}(\mathbf{\nu})))\int_{-\infty}^{+\infty}M(U_0)\drho,
	\end{aligned}
\end{equation*}
where  \eqref{eq_n0mu0} has been used. 
This immediately leads to the law of anisotropic surface diffusion \eqref{pure_surfacediffusion} at the leading order, with the normal velocity given by
\[
V=C_l\Delta_{\mathbf{s}}(\nabla_{\mathbf{s}}\cdot({\bm\xi}(\mathbf{\nu}))),
\]
where
\[
C_l=\frac{\int_{-\infty}^{+\infty}M(U_0)\drho}{3\sqrt{2}}=\frac{1}{3}\int_{-\infty}^{+\infty}(1-U_0^2)^{l-1}\frac{(1-U_0^2)}{\sqrt{2}}\mathrm{d}\rho=\frac{2}{3}\int_{0}^{1}(1-z^2)^{l-1}\mathrm{d}z.
\]

\subsection{Second-order volume conservation}
Based on the second-order approximation \eqref{second_u}, we will show the second-order volume conservation property \eqref{second_volume}. We restate the conservation property \eqref{manifold_M} using the asymptotic solution $u_\varepsilon$:
\[
\int_\Omega Q(u_\varepsilon)\dx=\text{constant}.
\]
This can be expanded as
\begin{equation}
	\label{eq_taylor}
	\begin{aligned}
		&\int_{\Omega}Q(u_\varepsilon)\dx\\
		=&\int_\Omega Q\left(\sigma\left(\frac{\text{dist}(x,\partial\Omega_\varepsilon^+)}{\varepsilon}\right)\right)\dx+\varepsilon\int_\Omega Q'\left(\sigma\left(\frac{\text{dist}(x,\partial\Omega_\varepsilon^+)}{\varepsilon}\right)\right)u_{unif,1}\dx+\mathcal{O}(\varepsilon^2)\\
		:=&I+II+\mathcal{O}(\varepsilon^2),
	\end{aligned}
\end{equation}
where \( I \) and \( II \) represent the leading-order and first-order contributions, respectively.

We first introduce an area measure for the $\rho$-level-set: 
\[
h(\rho):=\mathcal{H}^{d-1}(\{x\in\Omega:\text{dist}(x,\partial\Omega_\varepsilon^+)=\rho\}),
\]
where \( \mathcal{H}^{d-1} \) denotes the \((d-1)\)-dimensional Hausdorff measure. If \( \partial\Omega_\varepsilon^+ \) is of class \( \mathcal{C}^2 \), then \( h(\rho) \) is well-defined for almost every \( \rho \).

For the term $I$, we employ the coarea formula and reformulate it as   
\begin{equation*}\label{eq_I}
	\begin{aligned}
		&I= \int_{-\infty}^{+\infty}Q\left(\sigma(\frac{\zeta}{\varepsilon})\right)h(\zeta)\mathrm{d}\zeta\\
		=&\int_{0}^{+\infty}h(\zeta)+\left(Q\left(\sigma(\frac{\zeta}{\varepsilon})\right)-1\right)h(\zeta)\mathrm{d}\zeta+\int_{-\infty}^{0}(-h(\zeta))+\left(Q\left(\sigma(\frac{\zeta}{\varepsilon})\right)+1\right)h(\zeta)\mathrm{d}\zeta\\
		=&|\Omega_{\varepsilon}^+(t)|-|\Omega_{\varepsilon}^-(t)|+\int_{0}^{+\infty}\left(Q\left(\sigma(\frac{\zeta}{\varepsilon})\right)-1\right)\left(h(\zeta)-h(-\zeta)\right)\mathrm{d}\zeta\\
		=&|\Omega_{\varepsilon}^+(t)|-|\Omega_{\varepsilon}^-(t)|+\varepsilon\int_{0}^{+\infty}\left(Q\left(\sigma(\zeta)\right)-1\right)\left(h(\varepsilon \zeta)-h(-\varepsilon \zeta)\right)\mathrm{d}\zeta,
	\end{aligned}
\end{equation*}
where the third equality holds because $ Q $ and $\sigma$ are odd functions.

We will show that  
\begin{equation}\label{eq_hausdorff_order}
	\int_{0}^{+\infty}\left(Q\left(\sigma(\zeta)\right)-1\right)\left(h(\varepsilon \zeta)-h(-\varepsilon \zeta)\right)\mathrm{d}\zeta=\mathcal{O}(\varepsilon).
\end{equation}
To this end, we split the integral into two parts based on the domain decomposition $(0,+\infty)=(0,|\log(\varepsilon)|]\bigcup(|\log(\varepsilon)|,+\infty))$.
We would show that the resulting two parts are both $\mathcal{O}(\varepsilon)$.

By the first variation formula of the perimeter\cite{maggi2012sets}, we have
\[
h(\zeta)=h(0)+\zeta\int_{\partial\Omega_\varepsilon^+}\text{div}_{\partial\Omega_\varepsilon^+}\nabla \text{dist}(x)\mathrm{d}\mathcal{H}^{d-1}+\mathcal{O}(\zeta^2),
\]
which is simply rewritten as $ h(\zeta)=h(0)+\zeta h'(0)+\mathcal{O}(\zeta^2) $. Similarly, we have $ h(-\zeta)=h(0)-\zeta h'(0)+\mathcal{O}(\zeta^2) $. Combining them, we obtain that
\[
h(\zeta)-h(-\zeta)=2 \zeta h^{\prime}(0)+\mathcal{O}\left(\zeta^2\right).
\]
Now, for all $\varepsilon>0$ and $\zeta \in[ 0, |\log \varepsilon|]$, we have $\zeta \varepsilon \rightarrow 0$ as $\varepsilon \rightarrow 0$, and the following estimate holds uniformly as $\varepsilon \rightarrow 0$:
\[
h(\varepsilon \zeta)-h(-\varepsilon \zeta)=2 \zeta \varepsilon h^{\prime}(0)+\mathcal{O}\left(\varepsilon^2 \zeta^2\right),
\]
where the upper bound of the term $\mathcal{O}\left(\varepsilon^2 \zeta^2\right)$ is $C_1\varepsilon^2 \zeta^2$ with $C_1$ independent of $\zeta$.
Since $ 1-Q(u) $ is a polynomial function which vanishes at $ u=1 $ and $ \sigma(\zeta) $ converges to $ 1 $ exponentially as $ \zeta\to+\infty $, the moment $ \int_0^{+\infty}\zeta^k(1-Q(\sigma(\zeta)))\mathrm{d}\zeta $ is finite for all $k\geqslant1$. Hence,

\begin{equation}\label{eq_small}
	\begin{aligned}
		&\left|\int_{0}^{|\log(\varepsilon)|}\left(Q\left(\sigma(\zeta)\right)-1\right)\left(h(\varepsilon \zeta)-h(-\varepsilon \zeta)\right) \mathrm{d}\zeta\right|\\
		\leqslant&\varepsilon\int_{0}^{+\infty}\left(1-Q\left(\sigma(\zeta)\right)\right)\Big(2\zeta |h'(0)|+C_1\varepsilon \zeta^2\Big)\mathrm{d}\zeta =\mathcal{O}(\varepsilon).
	\end{aligned}
\end{equation}

Since $ h(\zeta) $ is bounded when $ \zeta<0$\cite{ambrosio2000geometric},
we can immediately derive that 
\begin{equation}\label{eq_big1}
	\left|\int^{+\infty}_{|\log(\varepsilon)|}h(-\varepsilon \zeta) \left(Q\left(\sigma(\zeta)\right)-1\right) \mathrm{d}\zeta\right|\leqslant C\int^{+\infty}_{|\log(\varepsilon)|} \left|Q\left(\sigma(\zeta)\right)-1\right| \mathrm{d}\zeta=\mathcal{O}(\varepsilon).
\end{equation}
From the properties of the Hausdorff measure\cite{ambrosio2000geometric}, in particular its invariance by isometry and its dilation properties, if $ \zeta\to+\infty $, we have 
\[
h(\zeta)=\mathcal{O}(\zeta^{d-1}),
\]
which leads to
\begin{equation}\label{eq_big2}
	\left|\int^{+\infty}_{|\log(\varepsilon)|}h(\varepsilon \zeta)\left(Q\left(\sigma(\zeta)\right)-1\right) \mathrm{d}\zeta\right|\leqslant C\varepsilon^{d-1}\int^{+\infty}_{0}\zeta^{d-1}\left|Q\left(\sigma(\zeta)\right)-1\right| \mathrm{d}\zeta=\mathcal{O}(\varepsilon^{d-1}).
\end{equation}
Combing \eqref{eq_small}, \eqref{eq_big1} and \eqref{eq_big2}, we have proved \eqref{eq_hausdorff_order}.

For the term $II$ in \eqref{eq_taylor}, we have
\[
\begin{aligned}
	|II|&\leqslant\varepsilon\|u_{unif,1}\|_\infty\int_{-\infty}^{+\infty} \left|Q'\left(\sigma\left(\frac{\zeta}{\varepsilon}\right)\right)h(\zeta)\right|\mathrm{d}\zeta \\
	&=\varepsilon^2\|u_{unif,1}\|_\infty\int_{-\infty}^{+\infty} \left|Q'\left(\sigma(\zeta)\right)h(\varepsilon \zeta)\right|\mathrm{d}\zeta.
\end{aligned}
\]
We will show that $\int_{-\infty}^{+\infty}\left| Q'\left(\sigma(\zeta)\right)h(\varepsilon \zeta)\right|\mathrm{d}\zeta =\mathcal{O}(1)$ by splitting the integral into three parts based on the domain decomposition $(-\infty,+\infty)=[-|\log(\varepsilon)|,|\log(\varepsilon)|]\bigcup(-\infty,-|\log(\varepsilon)|)\bigcup(|\log(\varepsilon)|,+\infty)$.

For the integral over $[-|\log(\varepsilon)|,|\log(\varepsilon)|]$, since $\varepsilon|\log \varepsilon|\to0$, we also have the following estimate uniformly as $\varepsilon \rightarrow 0$:
\[
h(\varepsilon \zeta)=h(0) +\mathcal{O}\left(\varepsilon \zeta\right),
\]
where the upper bound of the term $\mathcal{O}\left(\varepsilon \zeta\right)$ is $C_2\varepsilon \zeta$ with $C_2$ independent of $\zeta$. 
By a similar argument as in deriving \eqref{eq_small}, we have
\begin{equation}\label{three_I}
	\begin{aligned}
		&\int_{-|\log \varepsilon|}^{|\log \varepsilon|}\left| Q'\left(\sigma(\zeta)\right)h(\varepsilon \zeta)\right|\mathrm{d}\zeta\\
        \leqslant &\int_{-|\log \varepsilon|}^{|\log \varepsilon|}|Q'\left(\sigma(\zeta)\right)|(h(0)+C_2\varepsilon\zeta)\mathrm{d}\zeta\\
        \leqslant& h(0)\int_{-\infty}^{\infty}|Q'\left(\sigma(\zeta)\right)|\mathrm{d}\zeta+\mathcal{O}(\varepsilon)=\mathcal{O}(1).
	\end{aligned}
\end{equation}
Using similar arguments as in \eqref{eq_big1} and \eqref{eq_big2}, we can show that the other two integrals are $o(1)$ quantities:
\begin{equation}\label{three_II}
	\int_{-\infty}^{-|\log \varepsilon|}\left| Q'\left(\sigma(\zeta)\right)h(\varepsilon \zeta)\right|\mathrm{d}\zeta+\int_{|\log \varepsilon|}^{\infty}\left| Q'\left(\sigma(\zeta)\right)h(\varepsilon s)\right|\mathrm{d}\zeta=o(1).
\end{equation}
Combing \eqref{three_I} and \eqref{three_II}, we obtain  $\int_{-\infty}^{+\infty}\left| Q'\left(\sigma(\zeta)\right)h(\varepsilon \zeta)\right|\mathrm{d}\zeta =\mathcal{O}(1)$ that implies
\[
|II|\leqslant \varepsilon^2\|u_{unif,1}\|_\infty\int_{-\infty}^{+\infty} \left|Q'\left(\sigma(\zeta)\right)h(\varepsilon \zeta)\right|\mathrm{d}\zeta=\mathcal{O}(\varepsilon^2).
\]

Collecting the results for \( I \) and \( II \), we have 
\[
\text{constant}=\int_\Omega Q(u_\varepsilon)\dx=|\Omega_{\varepsilon}^+(t)|-|\Omega_{\varepsilon}^-(t)|+\mathcal{O}(\varepsilon^2).
\]
Using the fact that \( |\Omega_\varepsilon^+(t)| + |\Omega_\varepsilon^-(t)| = |\Omega|=\text{constant} \), we conclude the second-order volume conservation \eqref{second_volume}.

\section{Numerical results}\label{chapter4}
Due to the variational structure of the ACH-IC, the corresponding efficient numerical schemes can be designed easily by using many existing techniques, for example, the stabilized semi-implicit schemes\cite{shen2018stabilized}, invariant energy quadratization (IEQ) approach\cite{yang2017numerical}, scalar auxiliary variable (SAV) approach\cite{shen2019new}, etc. In this paper, we adopt a linear numerical scheme based on the stabilized-invariant energy quadratization (S-IEQ) approach\cite{xu2019efficient,yang2021efficient} to demonstrate the approximation performance of the ACH-IC for anisotropic surface diffusion. A more detailed discussion is provided in Appendix \ref{numerical_scheme} for reference.

It is worth noting that the mobility \( M(u)=(1-u^2)^l \) in our model includes a parameter \( l \), which directly affects the associated normal velocity, as given in \eqref{pure_surfacediffusion}. To facilitate meaningful comparisons and ensure consistency, we apply a temporal scaling transformation to normalize the interface diffusion coefficient to 1. Specifically, we rescale the time variable \( t \) by \( t' = t/C_l \), such that the diffusion speed in the rescaled time domain becomes independent of \( l \). If not otherwise specified, we set \( M(u) = (1 - u^2)^2 \), and in this case, $C_l=4/9$. Moreover, we set $Q(u)=\frac{\int_0^u(1-z^2)^k\mathrm{d}z}{\int_0^1(1-z^2)^k\mathrm{d}z}$ with $k=1$.

The choice of anisotropic energy density $ \gamma(\mathbf{n}) $ depends on the specific material properties. Here, we present two commonly used forms:
\begin{enumerate}
	\item Four-fold form\cite{wise2007solving}:
	\begin{equation}\label{four_fold}
		\gamma(\mathbf{n}) = 1 + \alpha( 4\sum_{i=1}^{d} n_i^4-3),
	\end{equation}
	where $\alpha$ is a constant that governs the strength of anisotropy. 	
	If $\alpha<1/15$, this anisotropic function is weakly anisotropic. Otherwise, it is strongly anisotropic.
	
	\item Riemannian metric form (also called BGN anisotropies)\cite{barrett2008numerical}:
	\begin{equation}\label{Riemannian_metric_form}
		\gamma(\mathbf{n}) = \sqrt{\mathbf{n}^T R \mathbf{n}},
	\end{equation}
	where $R$ is a symmetric positive definite matrix. This form encompasses ellipsoidal anisotropy\cite{zhao2020parametric} and certain Riemannian metric-based anisotropy\cite{jiang2019sharp}. The matrix $R$ encodes the specific anisotropic characteristics of the material.
\end{enumerate}

\subsection{Accuracy test for ellipsoidal anisotropic density} 
The computational domain is a square region $\Omega=[0,1]^2$, and the initial condition is represented by a circle centered at $(0.5,0.5)$ defined by 
\[
u(x,y,t=0)=-\tanh\left(\frac{\sqrt{(x-0.5)^2+(y-0.5)^2}-0.3}{5\sqrt{2}\times10^{-3}}\right).
\]
The interface $ \{x: u(x)=0\} $ is determined by interpolation, and we call the area of the domain enclosed by this interface as real volume. Table \ref{tab2:volume} presents the results of the error in the real volume and its convergence rate obtained by the ACH-IC \eqref{eq_NMN} and the ACH \eqref{eq_M} for various thickness parameter $\varepsilon$ and four different finial time $ T $ with time step size $\delta t=10^{-6}$. In these experiments, we considered the ellipsoidal anisotropy, defined as $\gamma(\mathbf{n})=\sqrt{2n_1^2+n_2^2}$. The theoretical prediction for the equilibrium shape under this anisotropy is a self-similar ellipsoid given by $\frac{x^2}{2}+y^2=1$ (as discussed in Ref.~\cite{zhao2020parametric}). 
It is evident from Table \ref{tab2:volume} that the ACH and the ACH-IC exhibit first-order and second-order real volume conservation, respectively. This observation aligns with the theoretical expectations. We can further observe that the error in the real volume of the ACH-IC is an order of magnitude smaller than that of the ACH model. 

\begin{table}[htb]
	\centering
	\fontsize{7}{10}\selectfont
	\caption{Convergence rate of real volume conservation for the ACH  and the ACH-IC, as $\varepsilon$ is gradually reduced, under the anisotropy $\gamma(\mathbf{n})=\sqrt{2n_1^2+n_2^2}$.}
	\label{tab2:volume}
	\begin{tabular}{|c|c|c|c|c|c|c|c|c|}
		\hline
		\multirow{3}{*}{$\varepsilon$}&
		\multicolumn{4}{c|}{Final time: $ T=1\times10^{-4} $}&\multicolumn{4}{c|}{Final time: $ T=5\times10^{-4} $}\cr\cline{2-9}&
		\multicolumn{2}{c|}{ACH}&\multicolumn{2}{c|}{ACH-IC}&\multicolumn{2}{c|}{ACH}&\multicolumn{2}{c|}{ACH-IC}\cr\cline{2-5}\cline{6-9}
		&error&order&error&order&error&order&error&order\cr
		\hline
		\hline
		0.08
		&2.32$\times10^{-2}$&$\backslash$&1.58$\times10^{-2}$&$\backslash$
		&4.21$\times10^{-2}$&$\backslash$&1.70$\times10^{-2}$&$\backslash$
		\cr\hline
		0.04
		&1.10$\times10^{-2}$&1.08&3.38$\times10^{-3}$&2.22
		&1.48$\times10^{-2}$&1.50&3.34$\times10^{-3}$&2.35
		\cr\hline
		0.02
		&3.81$\times10^{-3}$&1.52&8.09$\times10^{-4}$&2.06
		&4.99$\times10^{-3}$&1.57&7.77$\times10^{-4}$&2.10
		\cr\hline
		0.01
		&1.22$\times10^{-3}$&1.64&2.05$\times10^{-4}$&1.98
		&1.70$\times10^{-3}$&1.55&1.99$\times10^{-4}$&1.97
		\cr\hline
		\hline
		\multirow{3}{*}{$\varepsilon$}&
		\multicolumn{4}{c|}{Final time: $ T=1\times10^{-3} $}&\multicolumn{4}{c|}{Final time: $ T=2\times10^{-3} $}\cr\cline{2-9}&
		\multicolumn{2}{c|}{ACH}&\multicolumn{2}{c|}{ACH-IC}&\multicolumn{2}{c|}{ACH}&\multicolumn{2}{c|}{ACH-IC}\cr\cline{2-9}
		&error&order&error&order&error&order&error&order\cr
		\hline
		\hline
		0.08
		&5.06$\times10^{-2}$&$\backslash$&1.87$\times10^{-2}$&$\backslash$
		&5.66$\times10^{-2}$&$\backslash$&2.06$\times10^{-2}$&$\backslash$
		\cr\hline
		0.04
		&1.70$\times10^{-2}$&1.57&3.34$\times10^{-3}$&2.49
		&1.91$\times10^{-2}$&1.56&3.34$\times10^{-3}$&2.62
		\cr\hline
		0.02
		&5.76$\times10^{-3}$&1.56&7.73$\times10^{-4}$&2.11
		&6.76$\times10^{-3}$&1.49&7.75$\times10^{-4}$&2.11
		\cr\hline
		0.01
		&2.03$\times10^{-3}$&1.50&1.95$\times10^{-4}$&1.99
		&2.46$\times10^{-3}$&1.45&1.93$\times10^{-4}$&2.00
		\cr\hline
	\end{tabular}
\end{table}
\subsection{Accuracy test for four-fold anisotropic density}
In this subsection, we test the convergence rate in $\varepsilon$ under the four-fold anisotropic density \eqref{four_fold} with $\alpha=0.05$ and $\alpha=0.2$. 
The computational parameters for weakly anisotropic density (i.e., $\alpha=0.05$) are the same as before. For strongly anisotropic density (i.e., $\alpha=0.2$), we set the regularized strength $\beta=\varepsilon^2$, and the stabilization parameters as $S_1=S_2=4$, $S_3=\beta$. Similar numerical results can be obtained for different time snapshots. 
Additionally, we show the results of the error in the real volume and its convergence rate for the equilibrium states obtained through the ACH and ACH-IC, as presented in Table \ref{tab:volume}. A clear second-order convergence is observed for the ACH-IC, while the convergence rate for the ACH is seriously below second order.  This observation aligns with our theoretical justification.
\begin{table}[htb]
	\centering
	\fontsize{7}{10}\selectfont
	\caption{Convergence rate of real volume conservation for the ACH  and the ACH-IC with the four-fold form anisotropy \eqref{four_fold} as $\varepsilon$ decreases for final time $ T=0.01 $.}
	\label{tab:volume}
	\begin{tabular}{|c|c|c|c|c|c|c|c|c|}
		\hline
		\multirow{3}{*}{$\varepsilon$}&
		\multicolumn{4}{c|}{Weakly anisotropic case ($\alpha=0.05$)}&\multicolumn{4}{c|}{Strongly anisotropic case ($\alpha=0.2$)}\cr\cline{2-9}&
		\multicolumn{2}{c|}{ACH}&\multicolumn{2}{c|}{ACH-IC}&\multicolumn{2}{c|}{ACH}&\multicolumn{2}{c|}{ACH-IC}\cr\cline{2-5}\cline{6-9}
		&error&order&error&order&error&order&error&order\cr
		\hline
		\hline
		0.08
		&6.23$\times10^{-2}$&$\backslash$&1.71$\times10^{-2}$&$\backslash$
		&4.08$\times10^{-2}$&$\backslash$&1.42$\times10^{-2}$&$\backslash$
		\cr\hline
		0.04
		&2.20$\times10^{-2}$&1.50&3.41$\times10^{-3}$&2.33
		&1.12$\times10^{-2}$&1.87&2.97$\times10^{-3}$&2.25
		\cr\hline
		0.02
		&8.95$\times10^{-3}$&1.29&8.05$\times10^{-4}$&2.08
		&3.53$\times10^{-3}$&1.66&7.07$\times10^{-4}$&2.07
		\cr\hline
		0.01
		&3.78$\times10^{-3}$&1.24&2.02$\times10^{-4}$&2.00
		&1.12$\times10^{-3}$&1.66&1.46$\times10^{-4}$&2.27
		\cr\hline
	\end{tabular}
\end{table}


Furthermore, we compare the equilibrium shapes obtained by the ACH and ACH-IC in Fig.~\ref{fig:equilibrium} and Fig.~\ref{strongly_equi}.  In these figures, we use the same color to label the cases with the same $\varepsilon$ and $\beta$, with solid lines representing equilibrium shapes obtained by the ACH-IC and dashed lines representing those obtained by the ACH.
Overall, all the numerical equilibrium shapes exhibit pyramid-like structures with corners, and the number of ``facets'' also demonstrates a four-fold geometric symmetry, which agrees with the exact equilibrium shape (determined by the Wulff construction\cite{wulff1901}) shown by a solid red line.  However, a close observation demonstrates that the ACH incurs larger errors compared to the ACH-IC, especially when $\varepsilon=0.08$. As depicted in the zoom-in plots, both models exhibit convergence as the thickness parameter $\varepsilon$ decreases, but the ACH-IC achieves a higher accuracy. Especially for the weakly anisotropic case (Fig.~\ref{fig:equilibrium}), the numerical result of the ACH-IC with $\varepsilon=0.04$ even outperforms that of the ACH with $\varepsilon=0.01$. Therefore, using a larger $\varepsilon$ for the ACH-IC is still capable of producing better simulation results. Since in phase-field simulations the mesh size is heavily dependent on $\varepsilon$, this improvement brought by the ACH-IC helps realize numerical simulation of surface diffusion with coarser mesh while retaining sufficient interface resolution and therefore leads to the reduction in computational cost.

\begin{figure}[htb]
	\centering
	\includegraphics[trim=0.7cm 0.2cm 0.3cm 0.5cm, clip,width=0.6\linewidth]{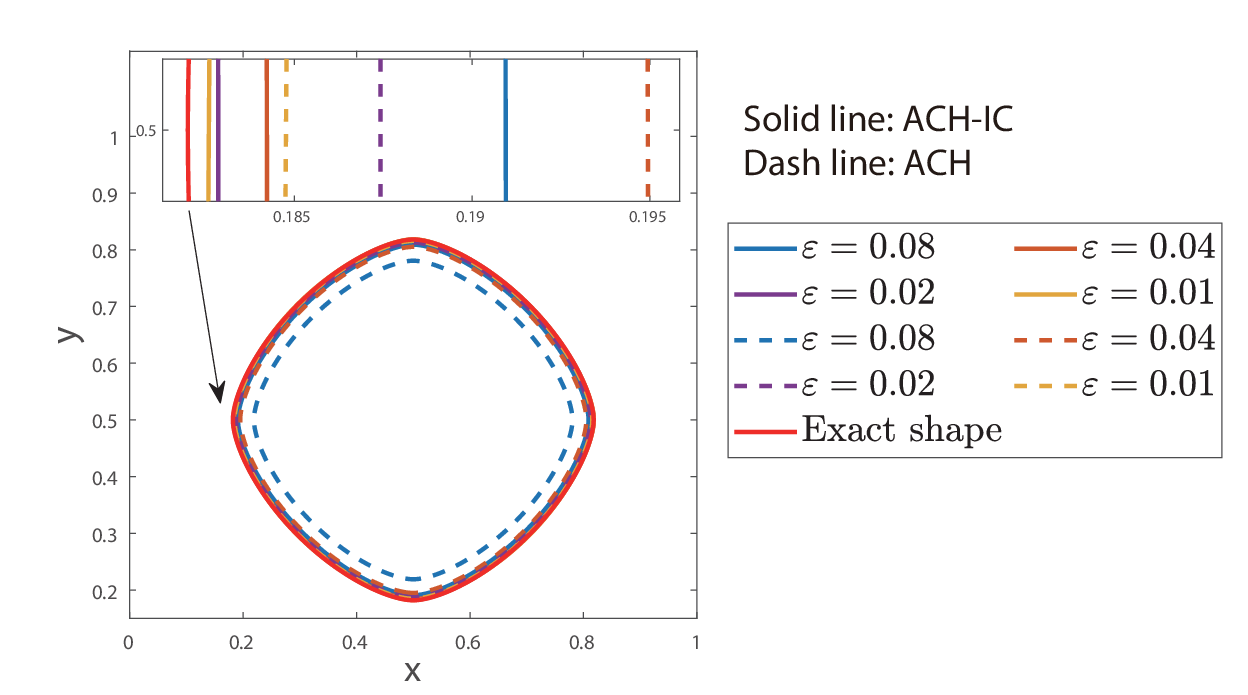}
	\caption{ Equilibrium shapes obtained by the ACH-IC (solid line) and ACH (dashed line) for weakly four-fold anisotropic density ($\alpha=0.05$) with thickness parameters $\varepsilon=0.08,~0.04,~0.02,~0.01$. The zoom-in plot nearby the left corner is shown, illustrating the convergence of both model and highlighting the superior accuracy of the ACH-IC.}
	\label{fig:equilibrium}
\end{figure}

\begin{figure}[htb]
	\centering
	\includegraphics[trim=0.5cm 0.2cm 0.3cm 0.5cm, clip,width=0.6\linewidth]{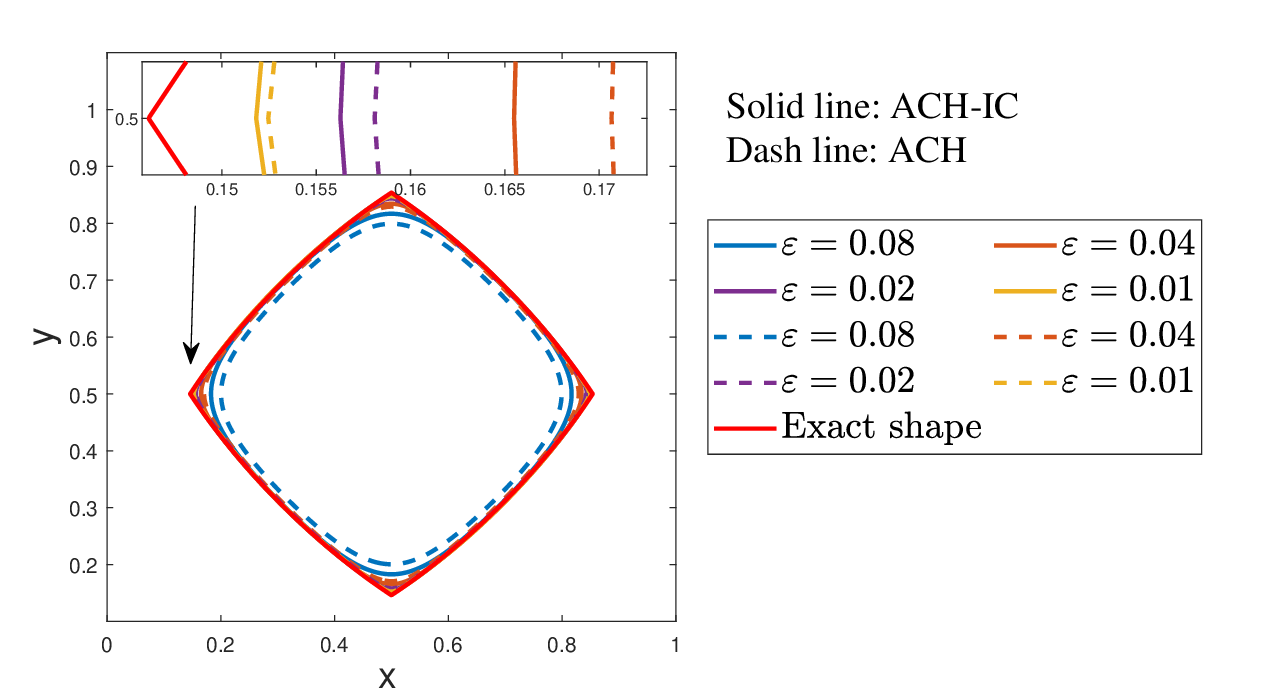}
	\caption{ Equilibrium shapes obtained by the ACH-IC (solid line) and ACH (dashed line) for strongly four-fold anisotropic density ($\alpha=0.2$) with thickness parameters $\varepsilon=0.08,~0.04,~0.02,~0.01$. The zoom-in plot nearby the left corner is shown, illustrating the convergence of both model and highlighting the superior accuracy of the ACH-IC.}
	\label{strongly_equi}
\end{figure}

\subsection{Spontaneous shrinkage of circular drops}
To further evaluate the advantages of the ACH-IC model, we present numerical comparisons with its traditional analogue. The study in Ref.~\cite{2017Spontaneous} identified two major deficiencies of traditional Cahn-Hilliard type models: the ``spontaneous shrinkage'' of an enclosed domain due to its boundary curvature and the existence of a ``critical radius'' below which the enclosed domain eventually vanishes due to diffusion. Hereby, we reproduce the same tests in the isotropic case, i.e., under a four-fold anisotropic density with \(\alpha = 0\). The computational parameters remain consistent with those used in the previous sections, with \(\varepsilon = 0.02\). The computational domain is a square region \(\Omega = [0, 1]^2\). The initial condition is given by a circular drop centered at \((0.5, 0.5)\) with an initial radius \(r_0\):
\[  
u(x, y, t=0) = -\tanh\left(\frac{\sqrt{(x-0.5)^2+(y-0.5)^2} - r_0}{\sqrt{2}\varepsilon}\right).  
\]  
Under this setting, the sharp-interface dynamics corresponding to such an initial circular profile should reach equilibrium at the same initial location with no shrinkage.

As shown in Ref.~\cite{2017Spontaneous}, when \(r_0 \leq \left(\frac{\sqrt{6}\varepsilon|\Omega|}{8\pi}\right)^{1/3} \approx 0.1249\), the circular drop disappears entirely. For \(r_0 > 0.1249\), the drop still shrinks but can be stabilized at an equilibrium state dramatically smaller than the initial circle with significant volume loss. The equilibrium state is reached when \(\|u^{n+1} - u^n\| < 10^{-8}\). To assess performance, we compare the equilibrium shapes computed by the ACH and ACH-IC for initial radii \(r_0 = 0.13\) and \(r_0 = 0.1\) respectively, as shown in Fig.~\ref{fig_circle13} and Fig.~\ref{fig_circle}.  

\begin{figure}[htb]
	\centering
	{\includegraphics[trim=0cm 0.0cm 0cm 0cm, clip,width=0.46\linewidth]{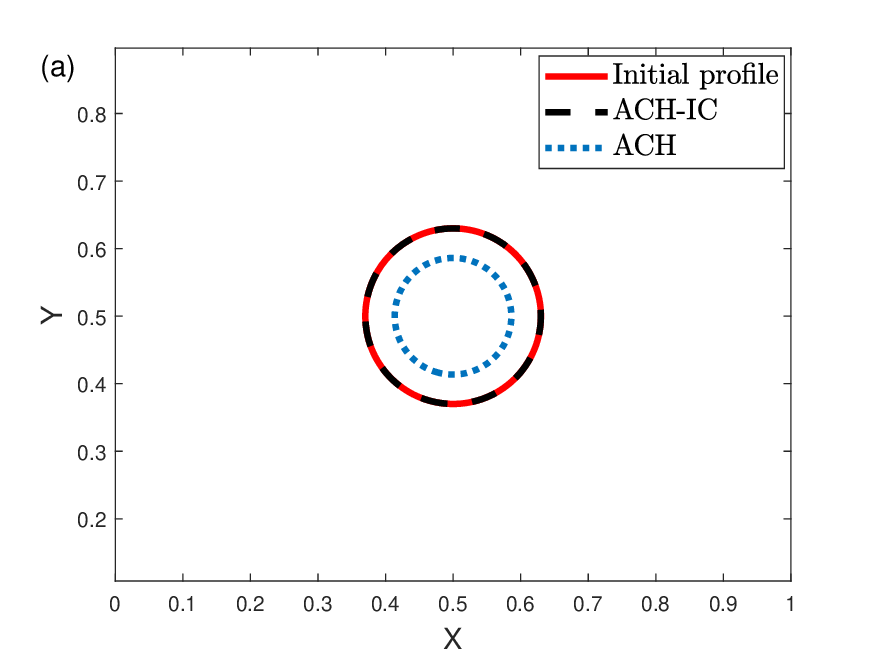}}
	{\includegraphics[trim=0cm 0cm 0cm 0cm, clip,width=0.46\linewidth]{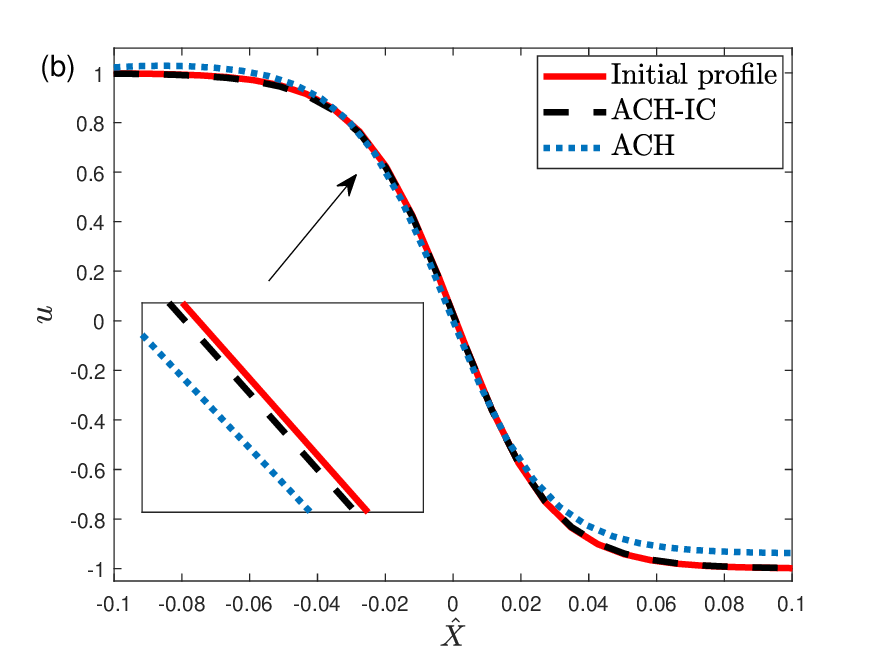}}
	\caption{\label{fig_circle13} (a) Comparisons of equilibrium shapes obtained by the ACH-IC (black line) and the ACH (blue line) in the isotropic case, with the initial circular profile (red solid line) of radii \(r_0 = 0.13\). (b) The cross-sectional profile of $u$ near the right half semi-circle at \( y = 0.5 \), where the shifted horizontal coordinate \( \hat{X} \) represents the distance to the rightmost interface point. The inset plot highlights a detailed comparison near the interface, where the ACH-IC model coincides well with the initial \(\tanh\) profile, whereas the ACH model exhibits a noticeable deviation.}
\end{figure}

\begin{figure}[htb]
	\centering
	{\includegraphics[trim=0cm 0.0cm 0cm 0cm, clip,width=0.46\linewidth]{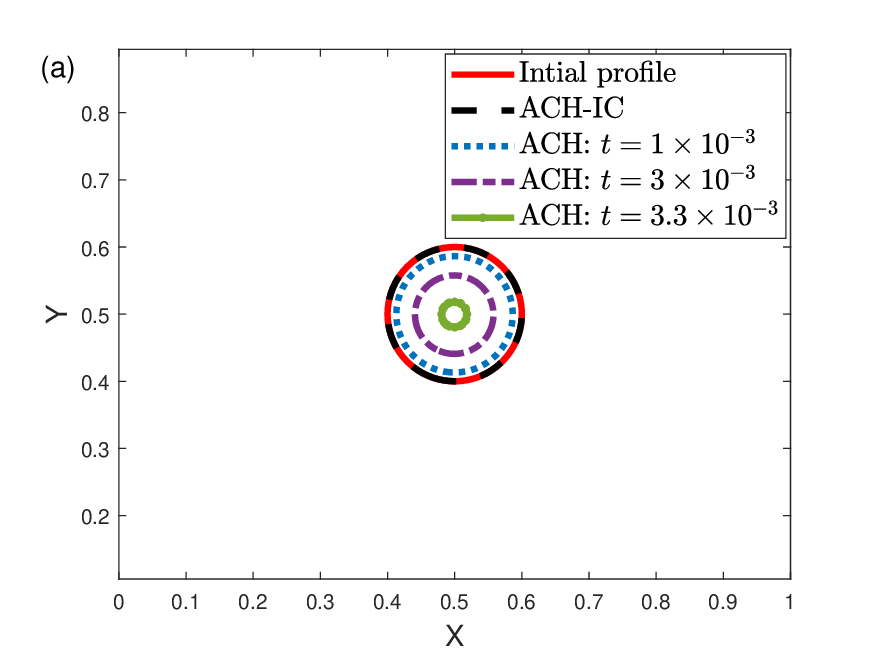}}
	{\includegraphics[trim=0cm 0cm 0cm 0cm, clip,width=0.46\linewidth]{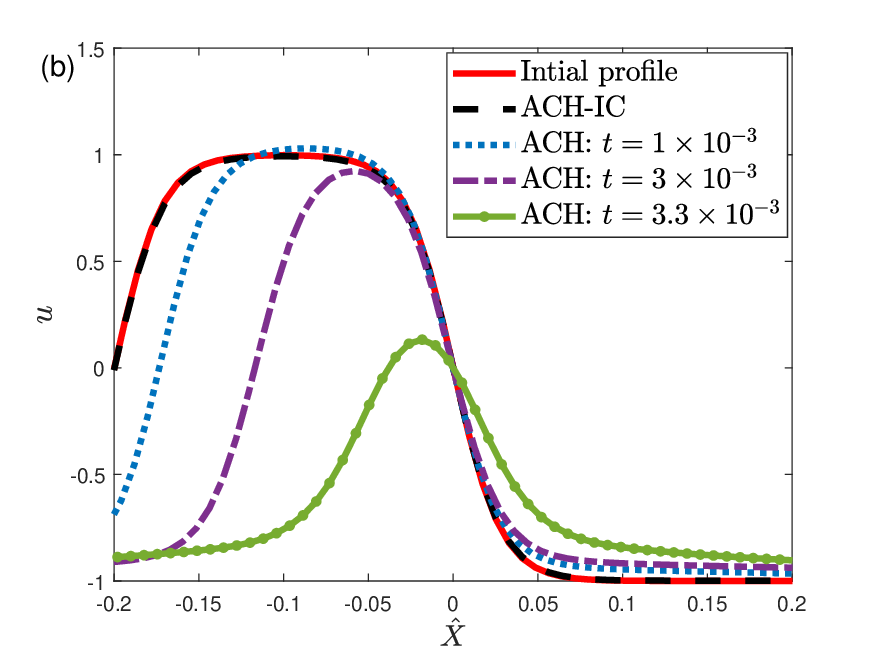}}
	\caption{\label{fig_circle} (a) Comparisons of the interface profiles obtained by the ACH-IC (black line) and the ACH (blue, purple, and green lines) during evolution in the isotropic case, where the initial circular profile (red solid line) has a radius \(r_0 = 0.1\) below the reported critical value. (b) The evolution of \( u \) along the shifted horizontal coordinate \( \hat{X} \), extracted from a cross-section at \( y = 0.5 \). }
\end{figure}

The results in Fig.~\ref{fig_circle13} (a) clearly demonstrate that for \( r_0 = 0.13 \), the circle computed using the ACH exhibits a significant shrinkage, while the ACH-IC preserves the initial profile with high accuracy. In a closer observation, the plot of the cross-sectional profile in Fig.~\ref{fig_circle13} shows perfect preservation of the initial \(\tanh\) profile in the ACH-IC solution during this equilibration process. The significant deviation seen in the ACH solution leads to a pronounced shrinkage that the ACH-IC model does not produce.

For \( r_0 = 0.1 \), a value below the critical radius reported in Ref.~\cite{2017Spontaneous}, the numerical results are presented in Fig.~\ref{fig_circle} (a). The interface of the ACH-IC effectively preserves the initial shape, stabilizing an equilibrium state that closely matches the initial circular profile. In sharp contrast, the interface of the ACH solution exhibits progressive shrinkage over time and eventually collapses at \( t = 3.3 \times 10^{-3} \).  
To further illustrate this computationally undesirable behavior, Fig.~\ref{fig_circle} (b) presents the cross-sectional profiles at \( y = 0.5 \). The transition profile of the ACH solution $u$ drastically deviates from its initial profile during evolution, with far-field values being negative and severely deviated from the ideal value $\pm1$ in bulk phases. These results demonstrate that the evolution of the ACH solution exhibits progressive shrinkage toward its final collapse.
In contrast, the ACH-IC model remains consistent with the initial \(\tanh\) profile, highlighting its superior ability to preserve the interfacial structure and the enclosed volume.

These numerical experiments highlight the significant improvements achieved by introducing a new conserved quantity in our model. The spontaneous shrinkage effect is effectively mitigated, and the critical radius is notably reduced (the investigation of the critical value in the ACH-IC demands further extensive analytical and numerical studies, which are beyond the scope of this work). This improvement brought by the ACH-IC model allows for the stable simulation of smaller drops in many interfacial phenomena using phase-field models, with the non-physical shrinkage effectively avoided. 
\subsection{Comparisons of the flux distributions}
To further compare the differences between the two models, this subsection focuses on their flux \( \mathbf{J} \) and normal flux \(\mathbf{J}_\nu\) distributions. The tests are conducted under isotropic conditions for simplicity, providing an illustrative and insightful comparison. The computational parameters remain the same as in previous sections with \(\varepsilon = 0.04\). The computational domain is a square region \(\Omega = [0, 1]^2\). The initial condition is  
\[
u(x,y,t=0) = \tanh\left(\frac{-\max( |x-0.5|-0.2,|y-0.5|-0.2)}{\sqrt{2}\varepsilon}  \right),
\]
which represents a square region centered at \((0.5, 0.5)\) with a side length of \(0.4\).  

The normal flux for the ACH is defined by  
\[
\mathbf{J}_\nu := -M(u) \nabla \mu \cdot \frac{\nabla u}{|\nabla u|},
\]  
while the normal flux for the ACH-IC is defined by  
\[
\mathbf{J}_\nu := -M(u) \nabla (N(u) \mu) \cdot \frac{\nabla u}{|\nabla u|}.
\]

\begin{figure}[htb]
	\centering
	\includegraphics[trim=0.5cm 0.6cm 0.5cm 0.5cm, clip,width=1\linewidth]{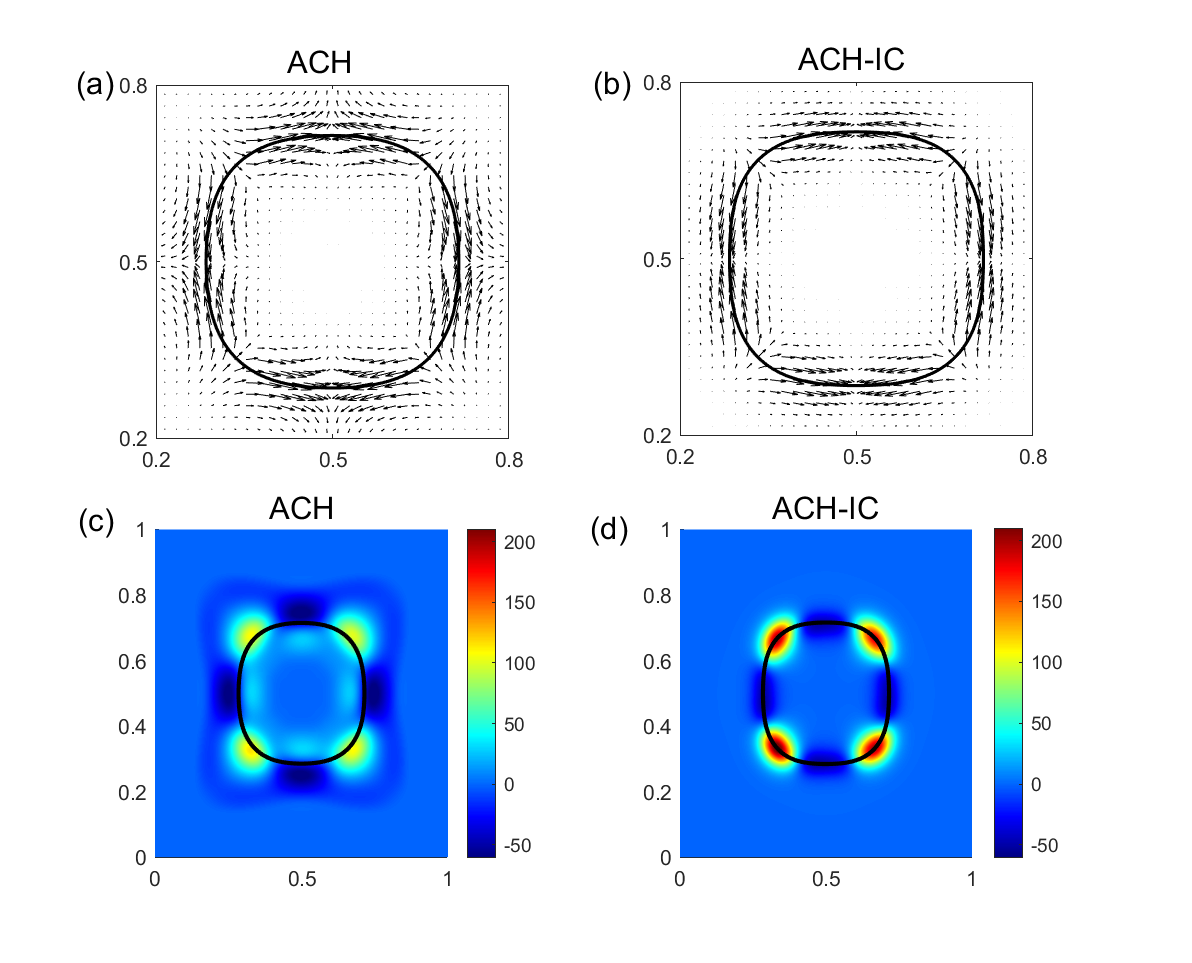}
	\caption{\label{fig_flux_small} (a) and (b) show the flux $\mathbf{J}$ distribution for the ACH and ACH-IC models at an early stage of evolution (\( t = 2 \times 10^{-5} \)) in the isotropic case. The arrows indicate the direction and magnitude of the flux, while the black solid line represents the interface. (c) and (d) show the corresponding normal flux $\mathbf{J}_\nu$ distribution.}
\end{figure}

\begin{figure}[htb]
	\centering
	\includegraphics[trim=0.5cm 0.6cm 0.5cm 0.5cm, clip,width=1\linewidth]{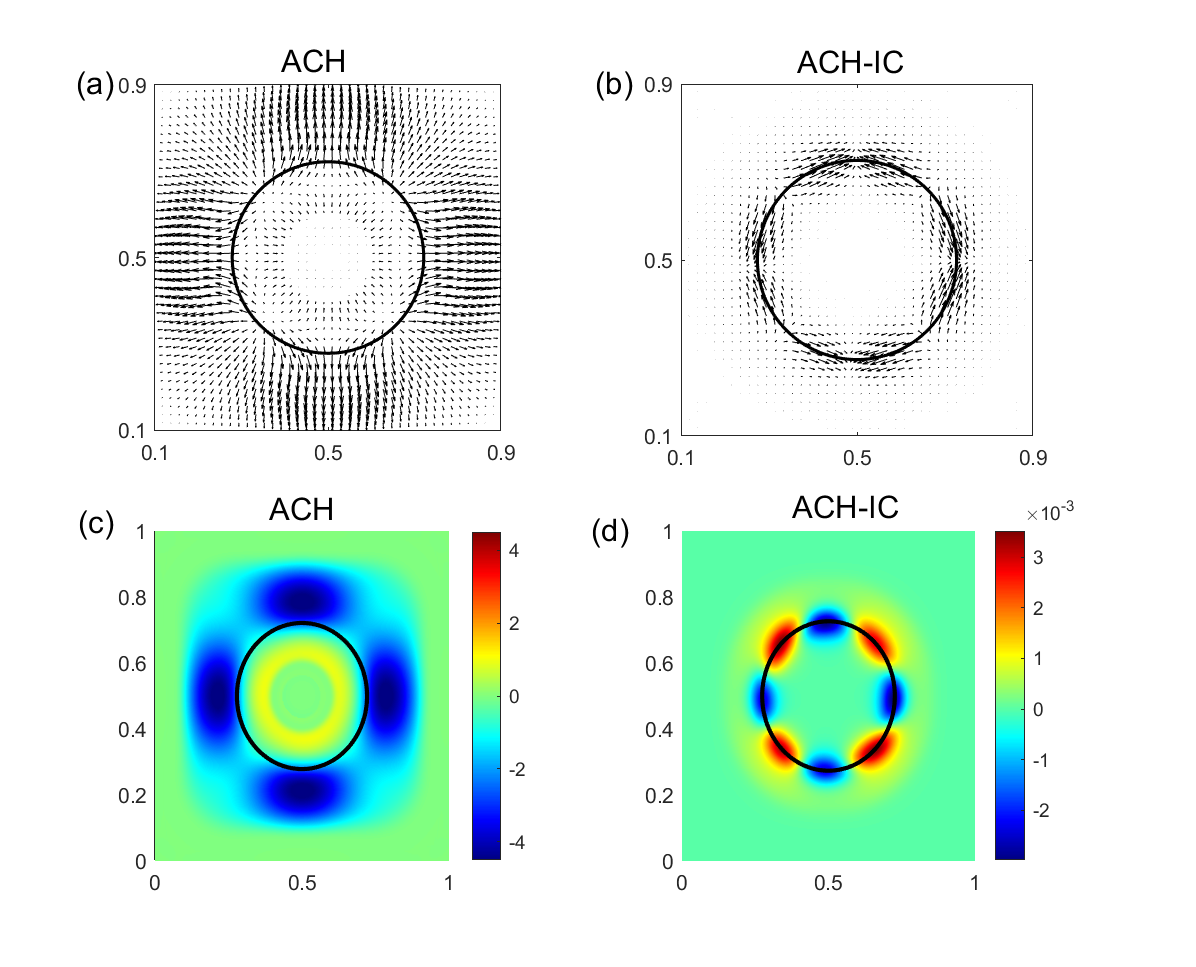}
	\caption{\label{fig_flux_large}(a) and (b) show the flux $\mathbf{J}$ distribution for the ACH and ACH-IC models at a later stage of evolution (\( t = 2 \times 10^{-4} \)) in the isotropic case. The arrows indicate the direction and magnitude of the flux, while the black solid line represents the interface. (c) and (d) show the corresponding normal flux $\mathbf{J}_\nu$ distribution.}
\end{figure}

Fig.~\ref{fig_flux_small} (a) and (b) show the flux distributions for the ACH and ACH-IC models in the early stage of evolution (\( t = 2 \times 10^{-5} \)). The arrows indicate the direction and magnitude of the flux, while the black solid line denotes the interface. 
For both models, the flux field demonstrates pronounced divergence at the square's corners and appreciable convergence towards the midpoints of the edges. This flux distribution effectively drives the interface evolution, characterized by the phase variable migration from the corner regions toward the edge centers, leading to a progressively rounded interface.
Fig.~\ref{fig_flux_small} (a) and (b) are visibly identical, with only a tiny difference observed for the normal flux distribution near the midpoints of the edges, i.e., the ACH-IC produces smaller normal flux there. 

Fig.~\ref{fig_flux_small} (c) and (d) present the normal flux distributions. We see that the ACH-IC model produces the normal flux which is concentrated around the interface, while the ACH model produces the normal flux which is distributed over a wider region. This comparison shows that the ACH-IC results are more aligned with the characteristics of surface diffusion.


Fig.~\ref{fig_flux_large} shows that in the late stage of the evolution ($t=2\times10^{-4}$), the flux and normal flux distributions obtained from the ACH and ACH-IC models exhibit distinct characteristics, which give rise to different interface evolution behaviors and contrasting volume conservation properties.
Compared to the early stage shown in Fig.~\ref{fig_flux_small}, now the overall flux magnitude in both models has decreased by several orders of magnitude, indicating that the surface diffusion has slowed down significantly in approaching the equilibrium. 
Besides this magnitude reduction, there are noticeable differences between the two models. In the ACH, a substantial amount of outward flux is observed outside the interface, showing a net outflow of phase variables and hence a further interface contraction. Therefore, the ACH struggles with volume conservation which suffers a gradual shrinkage of the enclosed region.
In contrast, the ACH-IC maintains a more localized and balanced flux distribution, with alternating positive and negative normal fluxes narrowly concentrated along the interface. This improved flux regulation helps the ACH-IC achieve better volume preservation in comparison to the ACH. 
The contrast in flux distribution between the two models demonstrates the importance and impact of the flux stabilization mechanism in governing long-term interface evolution, with the advantage of the ACH-IC approach manifested in evolving a more accurate interfacial structure. This will be further evidenced by the results in the next subsection.

\subsection{Evolution of a ``flower'' shape}
Regarding the numerical evolution, we compare the ACH solution and the ACH-IC solution with the numerical solution to the sharp-interface model as reference, subject to a complex initial condition, i.e., a ``flower'' shape, as depicted in Fig.~\ref{flower}. Specifically, the initial condition is given by 
\[
u(x,y,t=0)=-\tanh\left(\frac{\sqrt{x^2+y^2}-( 1-0.65\sin(7\,\text{atan2}(y,x)))}{\sqrt{2}\varepsilon}\right).
\]
\begin{figure}[htb]
	\centering
	\vspace{-0.4cm}
	\begin{minipage}{0.24\linewidth}
		{\includegraphics[trim=2.3cm 0cm 2.5cm 0.0cm, clip,width=1\linewidth]{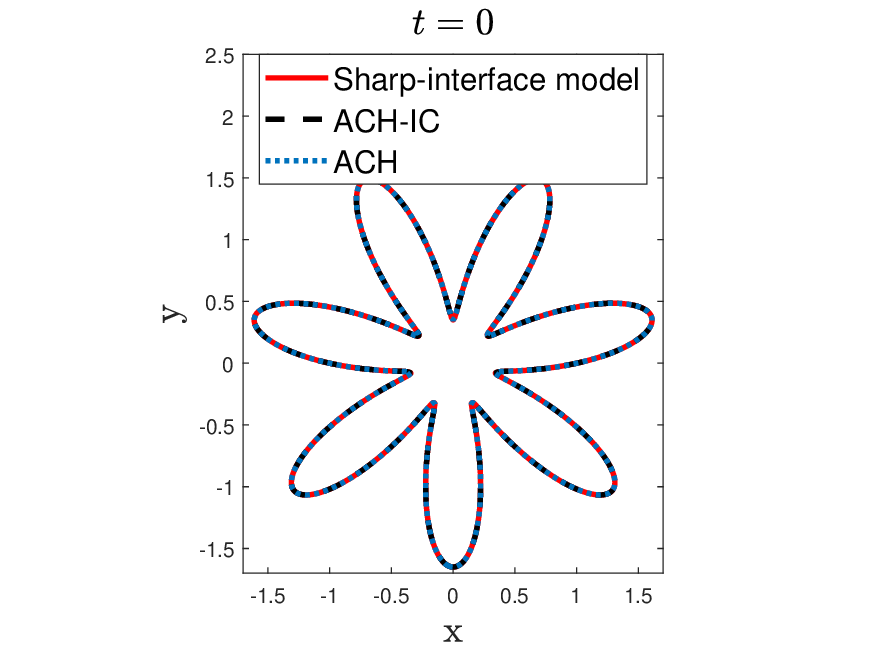}}
	\end{minipage}
	\begin{minipage}{0.24\linewidth}
		{\includegraphics[trim=2.3cm 0cm 2.5cm 0.0cm, clip,width=1\linewidth]{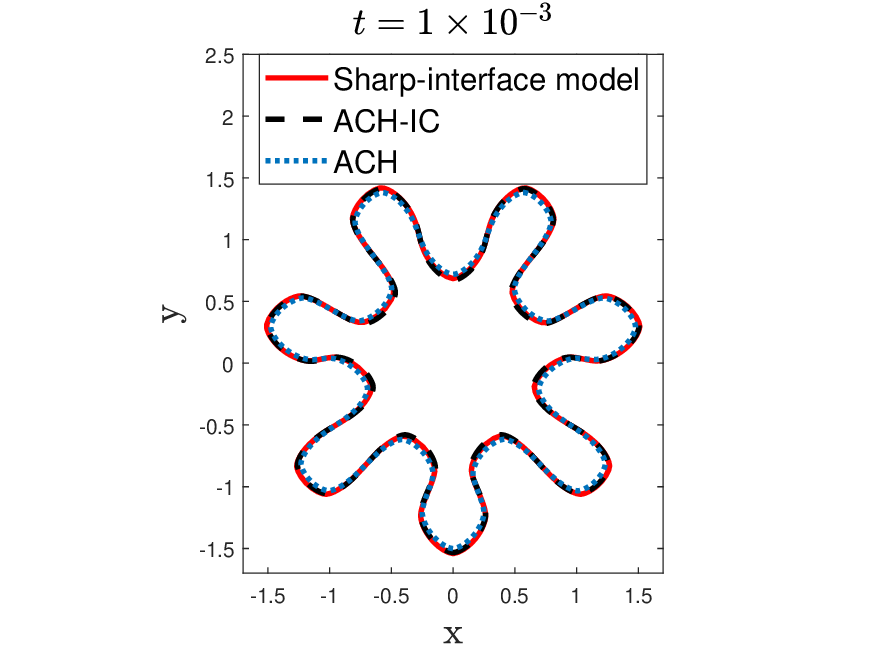}}
	\end{minipage}
	\begin{minipage}{0.24\linewidth}
		{\includegraphics[trim=2.3cm 0cm 2.5cm 0.0cm, clip,width=1\linewidth]{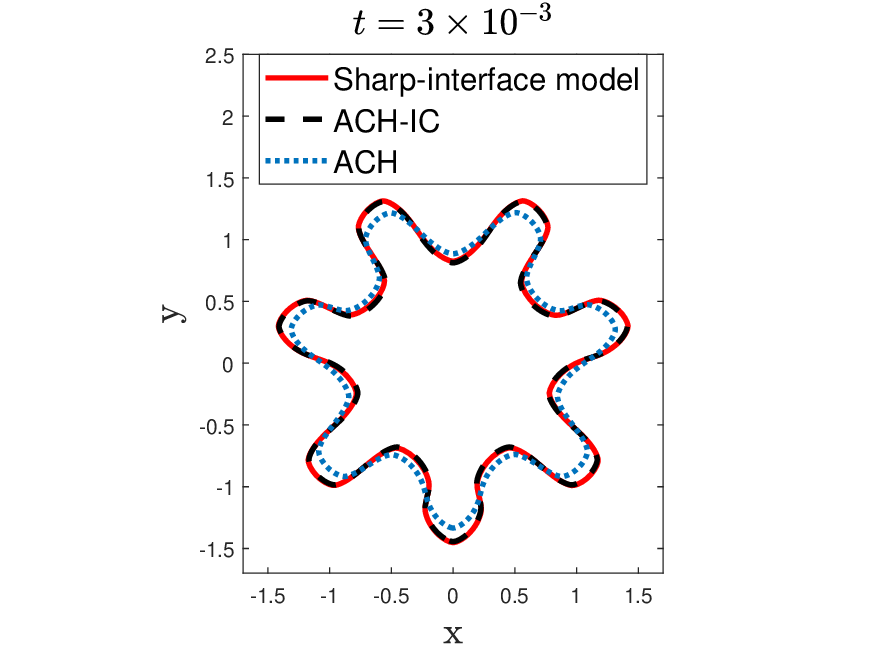}}
	\end{minipage}
	\begin{minipage}{0.24\linewidth}
		{\includegraphics[trim=2.3cm 0cm 2.5cm 0.0cm, clip,width=1\linewidth]{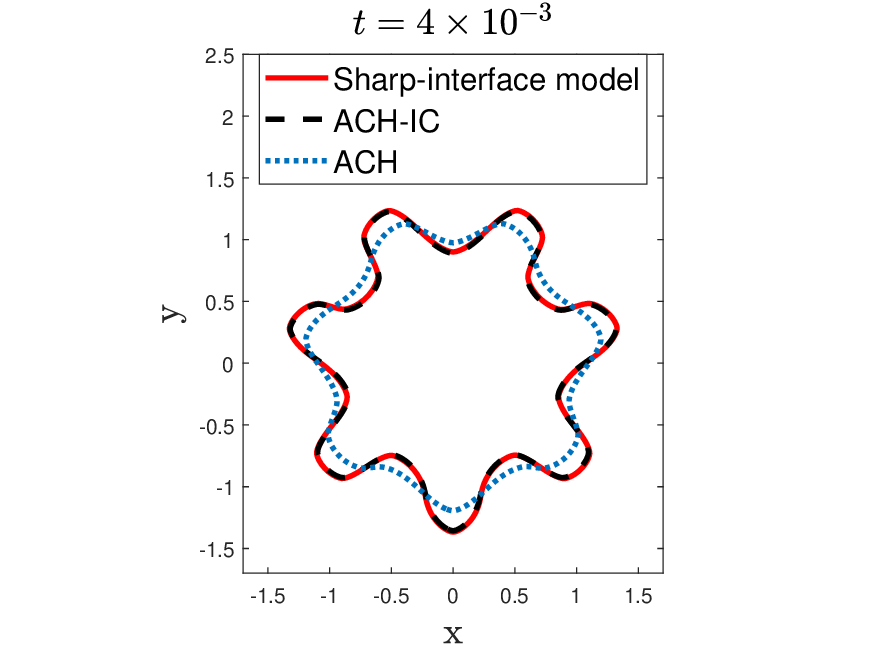}}
	\end{minipage}\\
	\begin{minipage}{0.24\linewidth}
		{\includegraphics[trim=2.3cm 0cm 2.5cm 0.0cm, clip,width=1\linewidth]{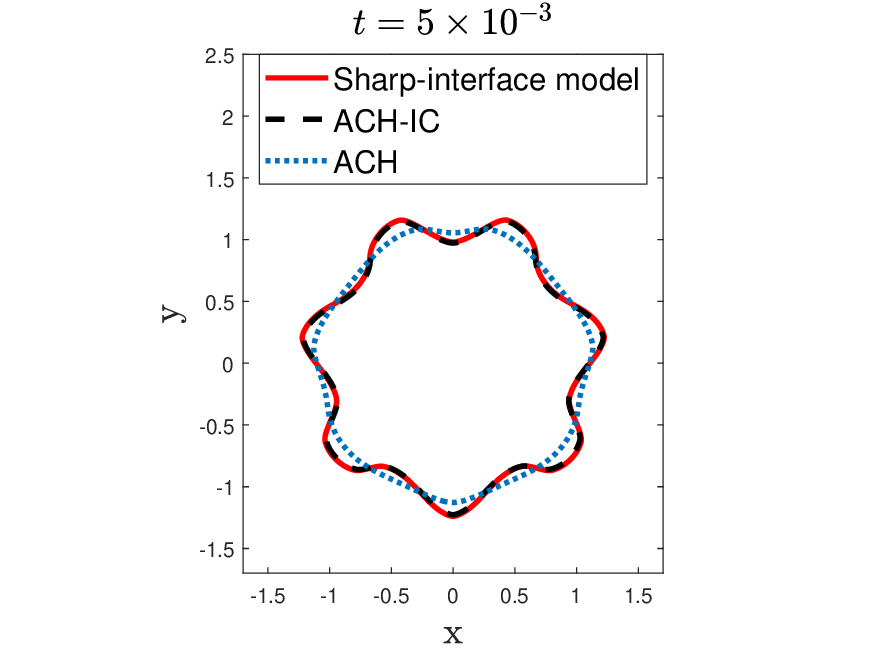}}
	\end{minipage}
	\begin{minipage}{0.24\linewidth}
		{\includegraphics[trim=2.3cm 0cm 2.5cm 0.0cm, clip,width=1\linewidth]{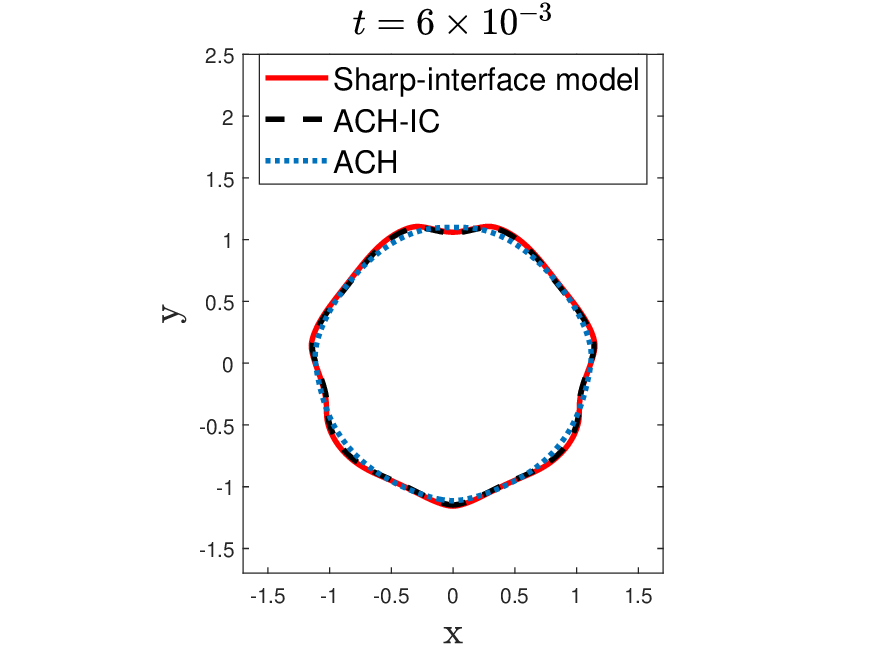}}
	\end{minipage}
	\begin{minipage}{0.24\linewidth}
		{\includegraphics[trim=2.3cm 0cm 2.5cm 0.0cm, clip,width=1\linewidth]{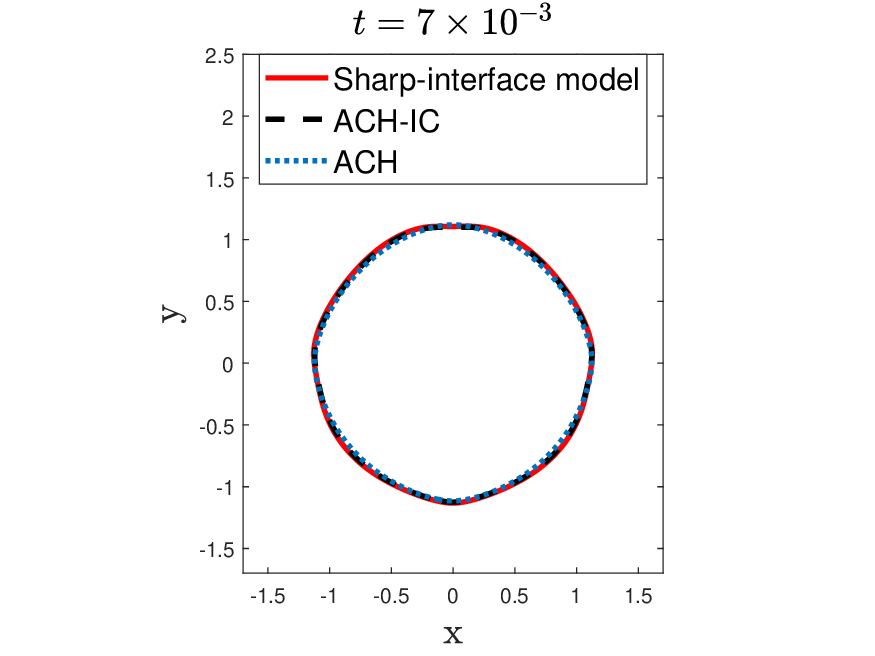}}
	\end{minipage}
	\begin{minipage}{0.24\linewidth}
		{\includegraphics[trim=2.3cm 0cm 2.5cm 0.0cm, clip,width=1\linewidth]{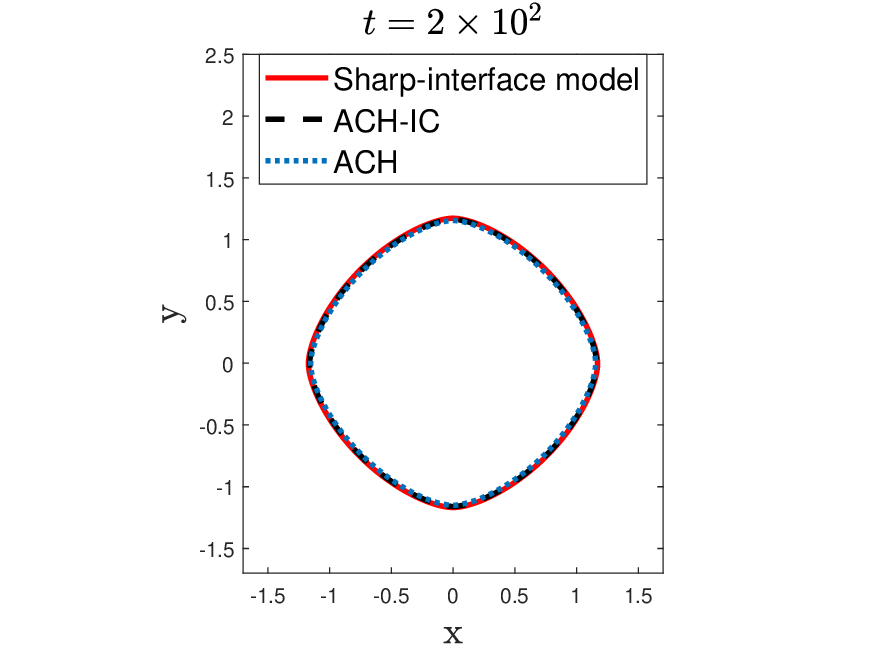}}
	\end{minipage}	
	\caption{Comparisons about the evolution of an initially ``flower'' shape toward its equilibrium at different time levels among the sharp-interface model, the ACH-IC, and the ACH.}
	\label{flower}
\end{figure}

We take the four-fold anisotropic density \eqref{four_fold} with $\alpha=0.05$. For the phase-field model, the computational domain is a square region $\Omega=[-1.72,1.72]^2$. During the simulation, we set $\varepsilon=0.06$, the mesh grid $256\times256$, and the time step $\delta t=10^{-6}$. We multiply the ACH  by a time scale $4/9$\cite{dziwnik2017anisotropic} such that its sharp-interface velocity is the same as the velocity of the ACH-IC and the sharp-interface model. We adopt the parametric finite element method for the sharp-interface model to simulate its numerical evolution (for details about the algorithm, see the reference \cite{jiang17jcp}).  We can observe that the ACH-IC solution can match the sharp-interface solution very well during the whole evolution process, while the ACH solution would have larger deviations from the reference at the early stage of evolution, especially at the segments where the curvature changes signs.

\begin{figure}[htb]
	\centering
	\includegraphics[trim=0.5cm 0.0cm 0.5cm 0.5cm, clip,width=0.5\linewidth]{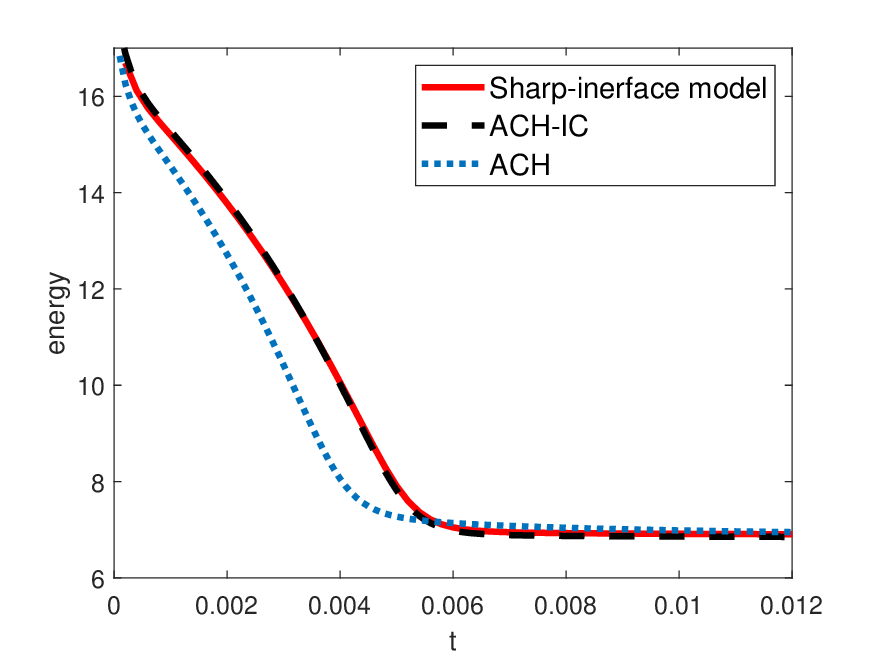}
	\caption{\label{energy_flower} comparison in energy evolution  of an initially ``flower'' shape toward its equilibrium for the sharp-interface model (in red), ACH-IC (in black), and ACH (in blue). }
\end{figure}

In addition, Fig.~\ref{energy_flower} presents the numerical energy decay for the three models: the sharp-interface model (in red), ACH-IC (in black), and ACH (in blue). 
The energy in the sharp-interface model is given by Equation \eqref{original_aniso_energy}, while the energy in the ACH and ACH-IC is given by Equation \eqref{Torabi_energy} multiplied by a mixing energy density $\lambda_m=\frac{3\sqrt{2}}{4}$\cite{modica1977limite,lussardi2015note}.
Initially, the energy decreases rapidly for all three models, indicating the system's evolution. As time proceeds, the energy evolution curves for the ACH-IC and the sharp-interface model agree remarkably, both showing a consistent and smooth decrease. In contrast, the energy evolution of the ACH deviates appreciably from that of the other two models. Overall, compared to the other two nearly identical curves, the ACH curve shows a faster approach toward a steady value. This quantitatively appreciable difference in energy evolution suggests that the ACH-IC and the sharp-interface model can produce energy dissipation rates with remarkable agreement, implying that the ACH-IC can describe surface diffusion in a more accurate way beyond the reach of ACH.

\subsection{Pinch-off dynamics of a thin tube}   	
Finally, we investigate the pinch-off dynamics of a thin tube in a computational domain $[0,1]\times[0,\frac{1}{4}]\times[0,\frac{1}{4}]$ to show the importance of possessing second-order real volume conservation. We consider the four-fold anisotropic density \eqref{four_fold} with $\alpha=0.05$. To perform the simulations, we set $\varepsilon=0.02$, the spatial mesh size $\delta x=1/256$, and the time step $\delta t=10^{-8}$. The initial shape is a tube that is very long and thin, defined by
\[
u(x,y,z,t=0)=\tanh\left(\frac{-\max( |x-1/2|-0.4,|y-1/8|-0.02,|z-1/8|-0.02)}{\sqrt{2}\varepsilon}  \right).
\]

\begin{figure}[!htbp]
	\centering
	{\includegraphics[trim=0cm 0.8cm 0cm 1.5cm, clip,width=0.46\linewidth]{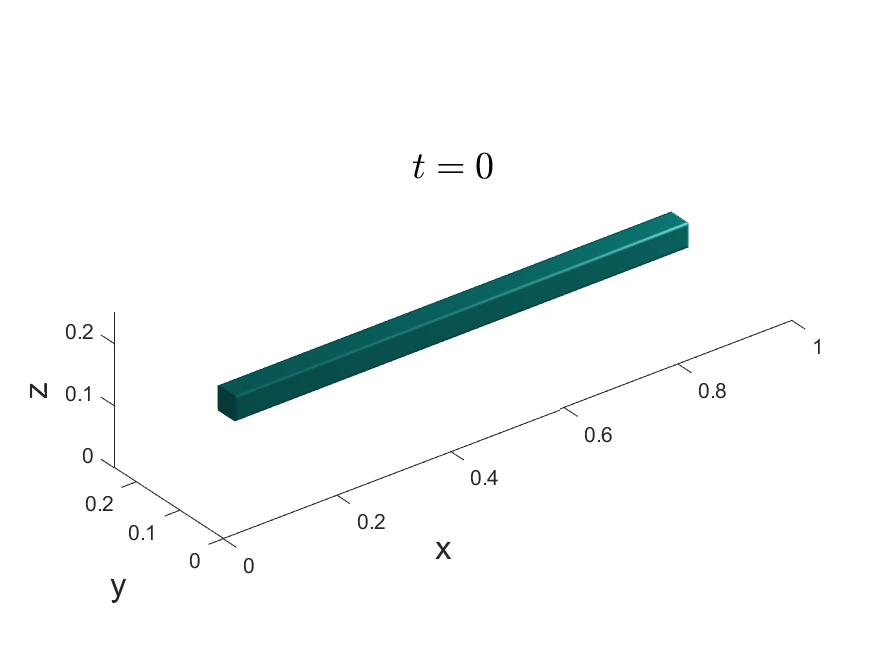}}
	{\includegraphics[trim=0cm 0.8cm 0cm 1.5cm, clip,width=0.46\linewidth]{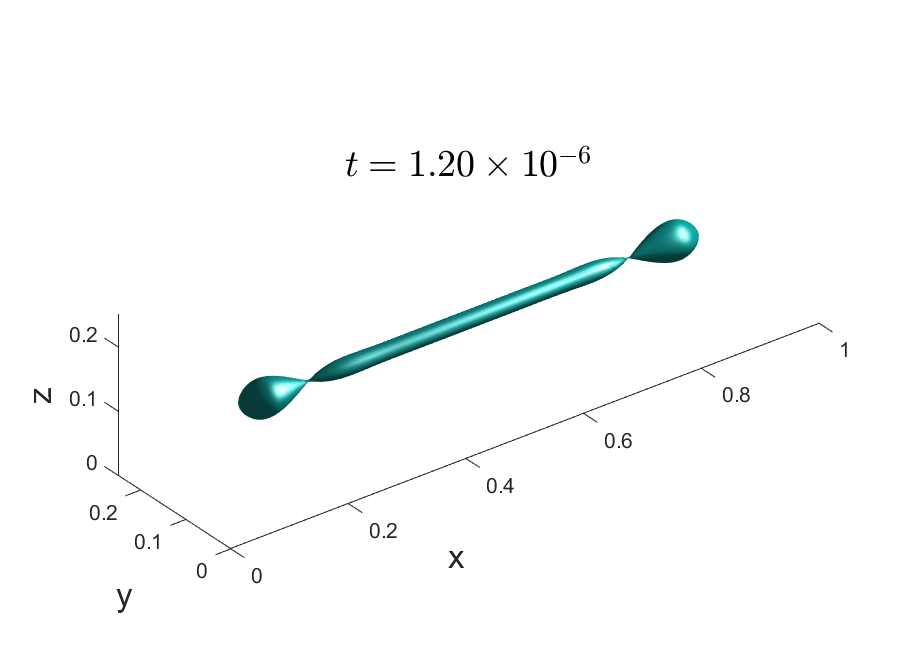}}\\
	{\includegraphics[trim=0cm 0.8cm 0cm 1.5cm, clip,width=0.46\linewidth]{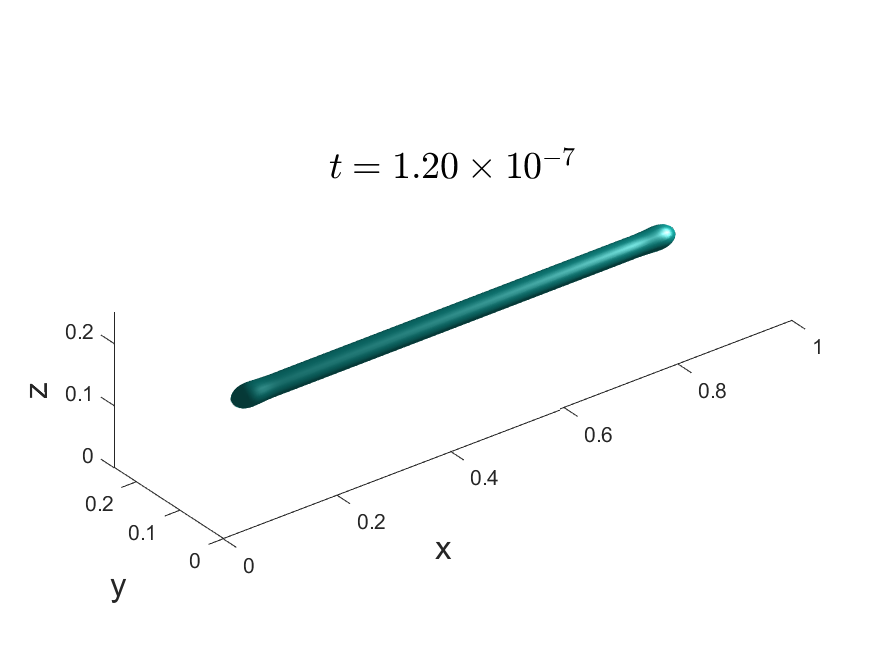}}
	{\includegraphics[trim=0cm 0.8cm 0cm 1.5cm, clip,width=0.46\linewidth]{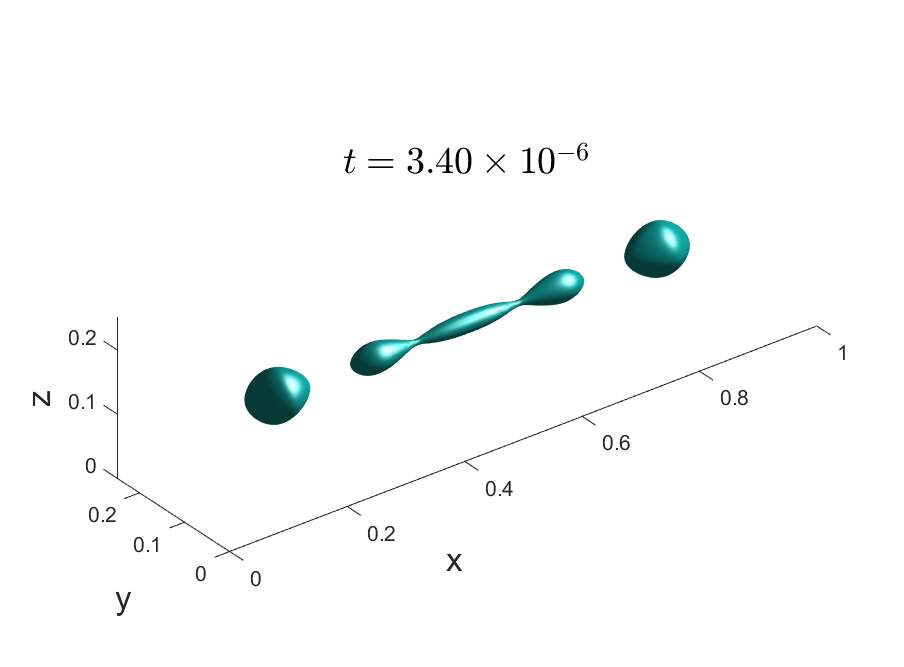}}\\
	{\includegraphics[trim=0cm 0.8cm 0cm 1.5cm, clip,width=0.46\linewidth]{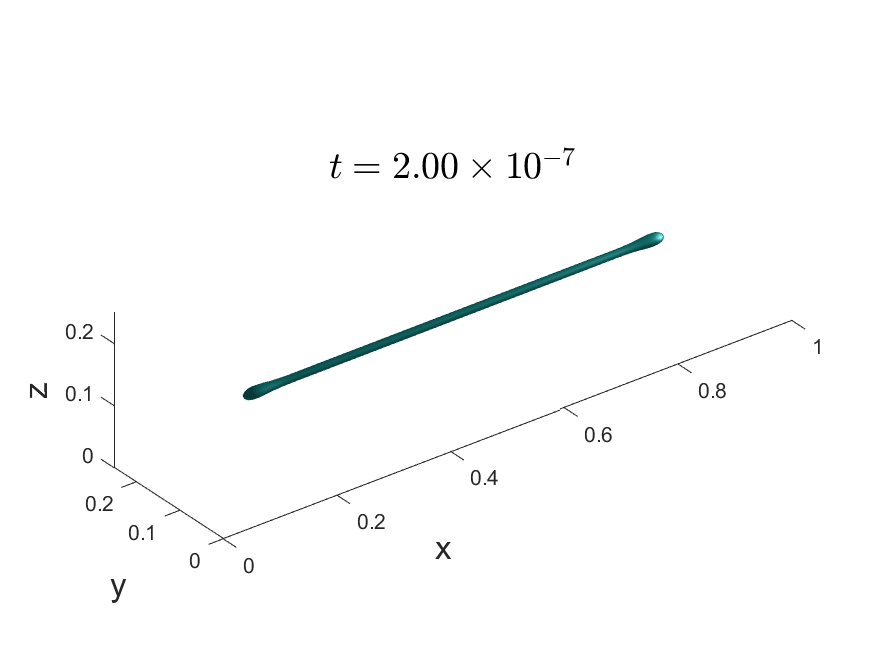}}
	{\includegraphics[trim=0cm 0.8cm 0cm 1.5cm, clip,width=0.46\linewidth]{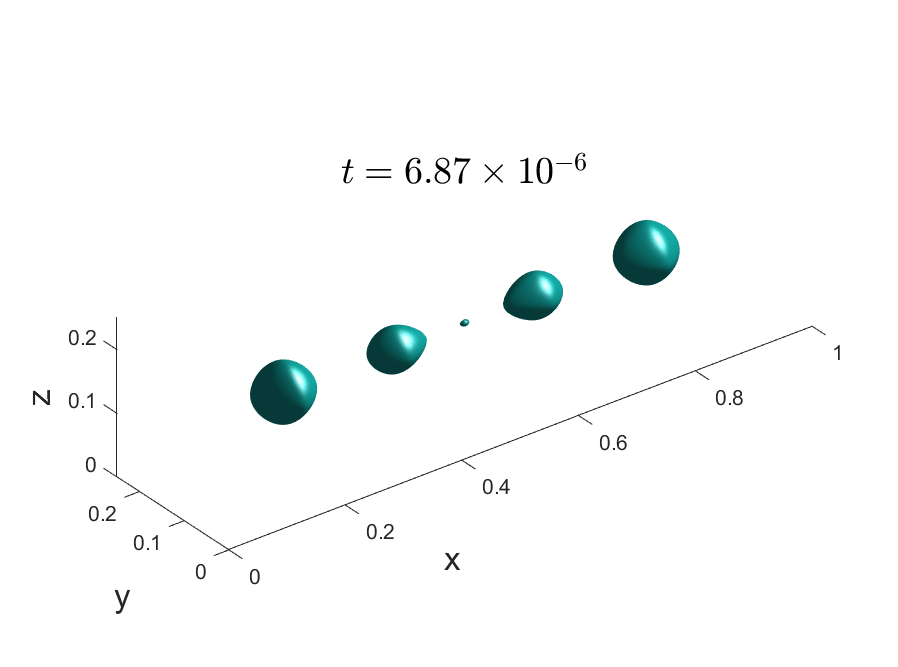}}
	\\
	{\includegraphics[trim=0cm 0.8cm 0cm 1.5cm, clip,width=0.46\linewidth]{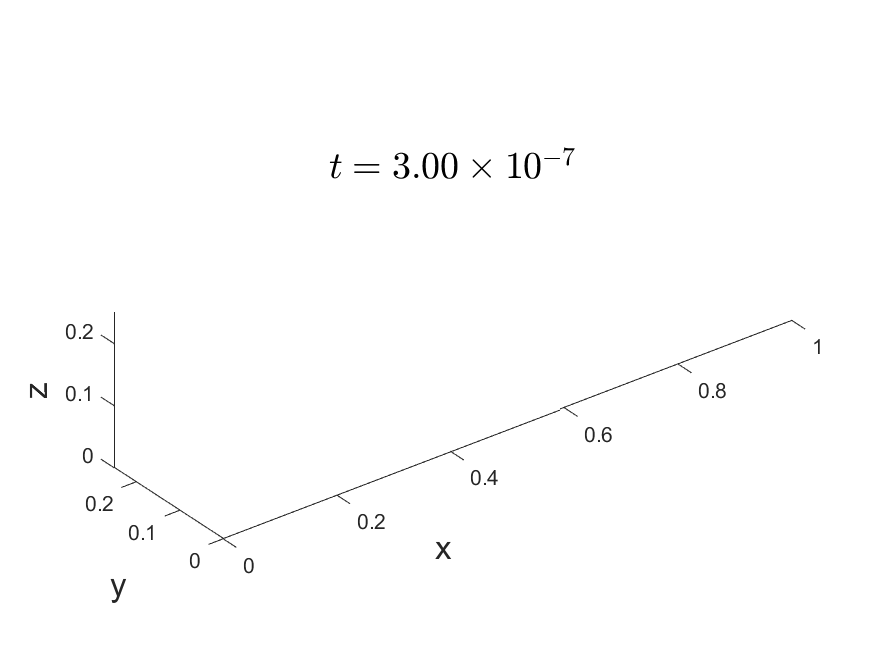}}
	{\includegraphics[trim=0cm 0.8cm 0cm 1.5cm, clip,width=0.46\linewidth]{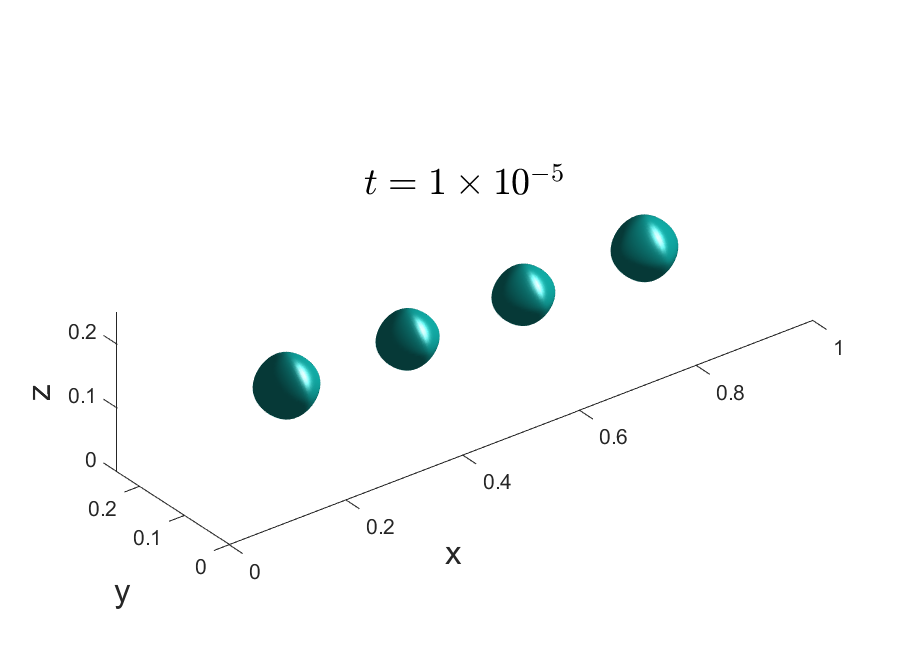}}
	\caption{Several snapshots depicting the simulation of the pinch-off dynamics of a thin tube using the ACH (left column) and the ACH-IC (right column).}
	\label{thinthinfilm}
\end{figure}

Fig.~\ref{thinthinfilm} illustrates the evolution processes obtained from the ACH model (shown on the left) and the ACH-IC model (shown on the right).
For the ACH, the surface disappears completely in the end, contradicting the behavior governed by surface diffusion. This discrepancy arises probably due to the same reason as in the spontaneous shrinkage of a small drop. Furthermore, the value of $\varepsilon$ used here is not small enough for the ACH to resolve the pinch-off process.
For the ACH-IC, we can clearly observe the pinch-off phenomenon and the generation of small drops that proceed through surface diffusion\cite{jiang2012phase,backofen2019convexity}.
Although similar numerical results can be obtained by employing the ACH with a much smaller $\varepsilon$, a much denser mesh is usually required to resolve the interface and hence higher computational cost is inevitable. These numerical results also show that the thickness of the thin tube must be greater than $2\varepsilon$ when the ACH model is employed to simulate pinch-off dynamics. 
Despite the excellent performance in the simulation of pinch-off dynamics by the ACH-IC model, an interesting phenomenon should be noted that the small drop in the middle is absorbed by the four big drops and eventually disappears, a phenomenon known as the Ostwald ripening \cite{Voorhee1985Ostwald}. This defect in simulating dynamics with very small drops can be relieved if a smaller $\varepsilon$ is adopted, or higher-order volume-preserving models could be exploited in accurately capturing surface diffusion processes.

\section{Conclusion and discussion}\label{chapter5}
In this work, we present a novel phase-field model that incorporates a new conserved quantity to overcome the limitations of real volume loss and spontaneous shrinkage in traditional Cahn-Hilliard-type models. By employing both variational analysis and matched asymptotic analysis, we establish that our model satisfies second-order conservation in real volume, a property essential for accurately capturing the dynamics of small drops. 

Extensive numerical results in both two and three dimensions are presented to show the superior performance of the newly proposed ACH-IC model in comparison with its traditional analogue. In particular, the ACH-IC model significantly mitigates the spontaneous shrinkage of small drops and drastically reduces the critical radius required for stable simulations. Notably, the ACH-IC model effectively preserves small drops during pinch-off dynamics and provides efficient and accurate simulations for complex surface diffusion with larger parameter $\varepsilon$.

\appendix	

\section{Inner expansion of the chemical potential}\label{expansion_second}
Before we expand the right hand of \eqref{eq2}, we need to derive some useful equalities. 
By \eqref{coordinates_change}, we have
\begin{equation}\label{expand_n}
	\mathbf{n}=\frac{\nabla u}{|\nabla u|}=\mathbf{\nu}+\varepsilon\frac{\nabla_\mathbf{s}U_0}{\partial_\rho U_0}+\mathcal{O}(\varepsilon^2)
\end{equation}
Using \eqref{equ_xi}, we can expand $ \hat{\gamma}(\mathbf{n}) $ as
\begin{equation}\label{expand_hatgamma}
	\begin{aligned}
		\hat{\gamma}(\mathbf{n})&=\hat{\gamma}\left(\mathbf{\nu}+\varepsilon\frac{\nabla_\mathbf{s}U_0}{\partial_\rho U_0}+\mathcal{O}(\varepsilon^2)\right)\\
		&=\hat{\gamma}(\mathbf{\nu})+\varepsilon{\bm\xi}(\mathbf{\nu})\cdot\frac{\nabla_\mathbf{s}U_0}{\partial_\rho U_0}+\mathcal{O}(\varepsilon^2)\\
		&=\gamma_0+\varepsilon\frac{\gamma_1}{\partial_{\rho} U_0}+\mathcal{O}(\varepsilon^2),
	\end{aligned}
\end{equation}
where we have introduced the short notations $ \gamma_0:=\hat{\gamma}(\mathbf{\nu}) $ and $ \gamma_1:={\bm\xi}(\mathbf{\nu})\cdot\nabla_\mathbf{s}U_0 $ for brevity. Similarly, we can expand $ {\bm\xi}(\mathbf{n}) $ as
\begin{equation}\label{expand_xi}
	\begin{aligned}
		{\bm\xi}(\mathbf{n})=&{\bm\xi}\left(\mathbf{\nu}+\varepsilon\frac{\nabla_\mathbf{s}U_0}{\partial_\rho U_0}+\mathcal{O}(\varepsilon^2)\right)
		={\bm\xi}(\mathbf{\nu})+\frac{\varepsilon}{\partial_{\rho}U_0}\nabla{\bm\xi}(\mathbf{\nu})\nabla_\mathbf{s}U_0+\mathcal{O}(\varepsilon^2).
	\end{aligned}
\end{equation}

For the first term of \eqref{eq2} on the right hand side, we have
\begin{equation}\label{right_mu_first}
	\hat{\gamma}(\mathbf{n})F'(u)=\gamma_0F'(U_0)+\varepsilon\left(\gamma_0F''(U_0)U_1+\gamma_1\frac{F'(U_0)}{\partial_{\rho}U_0}\right)+\mathcal{O}(\varepsilon^2).
\end{equation}
Combining \eqref{expand_n}, \eqref{expand_hatgamma} and \eqref{expand_xi}, we can expand $ \mathbf{m} $ as 
{\small
	\[
	\begin{aligned}	&\mathbf{m}=\left(\varepsilon^{-1}\gamma_0\partial_{\rho}U_0\mathbf{\nu}+\left(\gamma_0\partial_{\rho}U_1\mathbf{\nu}+\gamma_0\nabla_\mathbf{s}U_0+\gamma_1\mathbf{\nu}\right)+\mathcal{O}(\varepsilon)\right)+\\
		&\bigg(\varepsilon^{-1}\partial_{\rho}U_0\left({\bm\xi}(\mathbf{\nu})-\gamma_0\mathbf{\nu}\right)+\left(\nabla{\bm\xi}(\mathbf{\nu})\nabla_\mathbf{s}U_0-\gamma_1\mathbf{\nu}-\gamma_0\nabla_\mathbf{s}U_0	\right)+\mathcal{O}(\varepsilon)\bigg)	\left(\frac{1}{2}+\frac{F(U_0)}{(\partial_{\rho}U_0)^2}+\mathcal{O}(\varepsilon)\right).	
	\end{aligned}
	\]}
By \eqref{coordinates_change} and \eqref{equ_xi}, we can get
{\small
	\begin{equation}\label{div_m}
		\begin{aligned}
			&\nabla\cdot\mathbf{m}
			=\varepsilon^{-1}\partial_{\rho}(\mathbf{\nu}\cdot\mathbf{m})+\nabla_\mathbf{s}\cdot\mathbf{m}+\mathcal{O}(\varepsilon)\\
			=&\epsilon^{-1}\partial_{\rho}\left(\epsilon^{-1}\gamma_0\partial_{\rho}U_0+\left(\gamma_0\partial_{\rho}U_1+\gamma_1+\left(\mathbf{\nu}\cdot\nabla{\bm\xi}(\mathbf{\nu})\nabla_\mathbf{s}U_0-\gamma_1\right)(\frac{1}{2}+\frac{F(U_0)}{(\partial_{\rho}U_0)^2})	\right)	\right)\\
			&+\nabla_\mathbf{s}\cdot\left(\epsilon^{-1}\left(\gamma_0\partial_{\rho}U_0\mathbf{\nu}(\frac{1}{2}-\frac{F(U_0)}{(\partial_{\rho}U_0)^2})+\partial_{\rho}U_0{\bm\xi}(\mathbf{\nu})(\frac{1}{2}+\frac{F(U_0)}{(\partial_{\rho}U_0)^2})	\right)	\right)+\mathcal{O}(1).
		\end{aligned}
	\end{equation}
}

Combining \eqref{right_mu_first} and \eqref{div_m}, we finally obtain 
\eqref{EQ2}.

\section{Numerical scheme}\label{numerical_scheme}
 In this paper, we adopt a linear numerical scheme based on the stabilized-invariant energy quadratization (S-IEQ) approach\cite{xu2019efficient,yang2021efficient} to verify the properties of the ACH-IC.
To facilitate the numerical approach, we introduce an auxiliary variable $V$ defined as:
\[
V=\sqrt{\frac{\gamma(\mathbf{\bhatn})}{\varepsilon}\left(F(u)+\frac{\varepsilon^2	}{2}|\nabla u|^2\right)+\frac{\beta}{2\varepsilon^3}\left(F'(u)-\varepsilon^2\Delta u\right)^2+B},
\]
where $B$ is a constant that ensures the expression's positivity under the square root. 
With this new variable, the energy \eqref{Torabi_energy} become
\[
E(u,V)=\int_\Omega\left(V^2-B\right)\dx,
\] 
and we can reformulate the system of \eqref{eq_NMN} as:
\begin{equation}\label{new_ieq_system}
	\begin{aligned}
		&u_t=\frac{1}{\varepsilon}N(u) \nabla \cdot(M(u) \nabla (N(u)\mu)),\\
		&\mu=HV,\\
		&V_t=\frac{1}{2}Hu_t,
	\end{aligned}
\end{equation}
where 
\begin{align*}
H(u)&=\frac{\frac{1}{\varepsilon}\gamma(\mathbf{\bhatn})F'(u)-\varepsilon\nabla\cdot\mathbf{m}+\frac{\beta}{\varepsilon^2}\left(F''(u)\omega-\varepsilon^2\Delta \omega\right)}{\sqrt{\frac{\gamma(\mathbf{\bhatn})}{\varepsilon}\left(F(u)+\frac{\varepsilon^2	}{2}|\nabla u|^2\right)+\frac{\beta}{2\varepsilon^3}\left(F'(u)-\varepsilon^2\Delta u\right)^2+B}},\\
\mathbf{m}&=\gamma(\bhatn)\nabla u+\frac{(I-\bhatn\bhatn^T)\nabla_{\bhatn}\gamma(\bhatn)}{\sqrt{|\nabla u|^2+\varepsilon^2}}\left( \frac{1}{2}|\nabla u|^2+\frac{1}{\varepsilon^2}F(u)\right),\\
\omega&=\frac{1}{\varepsilon}F'(u)-\varepsilon\Delta u.
\end{align*}
We consider a time step size $\delta t>0$ and define $ t_n=n\delta t $ for $ 0\leqslant n\leqslant N $ with final time $T=N\delta t$. With these definitions, we use a semi-discrete time discretization scheme based on the Invariant Energy Quadratization (IEQ) approach to solve the equivalent system \eqref{new_ieq_system}:
{\small
	\begin{numcases}{}
		\frac{u^{n+1}-u^n}{\delta t}=\frac{1}{\varepsilon}N(u^n) \nabla \cdot(M(u^n) \nabla (N(u^n)\mu^{n+1})),\label{ieq_system1_1}\\
		\mu^{n+1}=H^nV^{n+1}+\frac{S_1}{\varepsilon}(u^{n+1}-u^n)-S_2\varepsilon\Delta(u^{n+1}-u^n)+S_3\varepsilon\Delta^2(u^{n+1}-u^n),\label{ieq_system1_2}\\
		V^{n+1}-V^{n}=\frac{1}{2}H^n(u^{n+1}-u^n)\label{ieq_system1_3}.
	\end{numcases}
}where  $ S_1 $, $ S_2 $, $ S_3 $ are three stabilization constants, and $ H^n=H(u^n) $. It should be remarked that using the Adams-Bashforth backward differentiation formula, we can also construct a second-order time-stepping numerical scheme. The investigation of efficient and sophisticated numerical schemes is beyond the scope of this work.
\begin{remark}
	In order to enhance energy stability and allow for larger time step sizes, we incorporate two additional stabilization terms into the numerical scheme. These stabilization terms help mitigate non-physical oscillations caused by the anisotropic coefficient $\gamma(\mathbf{\hat{n}})$, as the normal vector undergoes frequent sign changes in bulk regions where $u\approx\pm1$. Although unconditional energy stability can be theoretically shown without these stabilization terms, the inclusion of such terms helps avoid non-physical oscillations in numerical simulations. Similar ideas and numerical results can be found in Refs.~\cite{xu2019efficient,yang2021efficient}.
\end{remark}	
\begin{remark} 	
	To balance the explicit treatment of $ \frac{1}{\varepsilon}\gamma(\mathbf{n})F'(u) $, we introduce one term $ \frac{S_1}{\varepsilon}(u^{n+1}-u^n) $. This term introduces an additional consistency error of order $ \frac{S_1\delta t}{\varepsilon}u_t $, which is of the same order as the error introduced by the explicit treatment of $ \frac{1}{\varepsilon}\gamma(\mathbf{n})F'(u) $. Similarly, the other stabilization term is introduced to balance $ -\nabla\cdot(\gamma(\mathbf{n})\nabla u) $ and $\Delta \omega$.  
\end{remark}

The energy stability can be proved based on the standard arguments of the IEQ approach which is shown as follows:
\begin{theorem}[energy stable]
	The numerical scheme \eqref{ieq_system1_1}-\eqref{ieq_system1_3} is unconditional energy stable and the following discrete energy law holds for
	any $\delta t>0$:
	\begin{equation}\label{energy_law}
		\frac{1}{\delta t}(E^{n+1}-E^n)\leqslant-\frac{1}{\varepsilon}\int_\Omega M(u^n)|\nabla(N(u^n)\mu^{n+1})|^2\dx\leqslant0,
	\end{equation}
	where $ E^n=\int_\Omega |V^{n}|^2\dx-B|\Omega| $.
\end{theorem}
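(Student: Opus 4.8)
The plan is to run the standard energy argument for IEQ-type schemes: test the discrete mass-balance equation \eqref{ieq_system1_1} with the discrete chemical potential $\mu^{n+1}$, use the weak form to expose the self-adjoint dissipative structure, and then substitute the definition of $\mu^{n+1}$ from \eqref{ieq_system1_2} together with the auxiliary-variable update \eqref{ieq_system1_3} to reconstruct the telescoping difference $E^{n+1}-E^n$. Throughout I assume periodic or natural (no-flux) boundary conditions on $\partial\Omega$, so that all the boundary terms arising from integration by parts — including those for the biharmonic stabilization term, which needs the extra conditions $\partial_{\hat n}(u^{n+1}-u^n)=\partial_{\hat n}\Delta(u^{n+1}-u^n)=0$ — vanish; I also use that $M(u^n)=(1-u^n{}^2)^l\geqslant0$ (true in particular for the default even exponent $l=2$) and that the stabilization constants satisfy $S_1,S_2,S_3\geqslant0$.

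First I would multiply \eqref{ieq_system1_1} by $\mu^{n+1}$ and integrate over $\Omega$; moving the factor $N(u^n)$ onto $\mu^{n+1}$ and integrating by parts gives
\[
\int_\Omega \frac{u^{n+1}-u^n}{\delta t}\,\mu^{n+1}\,\dx
=-\frac{1}{\varepsilon}\int_\Omega M(u^n)\bigl|\nabla\bigl(N(u^n)\mu^{n+1}\bigr)\bigr|^2\,\dx .
\]
Next I would substitute \eqref{ieq_system1_2} into the left-hand side and treat the four resulting contributions separately. For the $H^n V^{n+1}$ term, I use \eqref{ieq_system1_3} in the form $H^n(u^{n+1}-u^n)=2(V^{n+1}-V^n)$ and the elementary identity $2(a-b)a=a^2-b^2+(a-b)^2$ to obtain
\[
\int_\Omega \frac{u^{n+1}-u^n}{\delta t}H^n V^{n+1}\,\dx
=\frac{1}{\delta t}\int_\Omega\Bigl((V^{n+1})^2-(V^n)^2+(V^{n+1}-V^n)^2\Bigr)\,\dx .
\]
For the three stabilization terms, integration by parts (once, once, and twice, respectively) turns them into
\[
\frac{S_1}{\varepsilon\,\delta t}\|u^{n+1}-u^n\|^2
+\frac{S_2\varepsilon}{\delta t}\|\nabla(u^{n+1}-u^n)\|^2
+\frac{S_3\varepsilon}{\delta t}\|\Delta(u^{n+1}-u^n)\|^2\ \geqslant\ 0 .
\]
Collecting these and recalling $E^n=\int_\Omega|V^n|^2\,\dx-B|\Omega|$, so that $E^{n+1}-E^n=\int_\Omega((V^{n+1})^2-(V^n)^2)\,\dx$, I arrive at
\[
\frac{1}{\delta t}(E^{n+1}-E^n)
=-\frac{1}{\varepsilon}\int_\Omega M(u^n)\bigl|\nabla(N(u^n)\mu^{n+1})\bigr|^2\,\dx
-\frac{1}{\delta t}\|V^{n+1}-V^n\|^2
-(\text{nonnegative stabilization terms}),
\]
from which \eqref{energy_law} follows by dropping the nonnegative terms on the right and using $M(u^n)\geqslant0$.

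There is no deep obstacle here: the argument is purely algebraic once the explicit treatment of $M$, $N$, $H$ is accepted, and the linearity of the scheme in $(u^{n+1},\mu^{n+1},V^{n+1})$ is precisely what makes the IEQ quadratization work. The only points that require genuine care are bookkeeping points — verifying that every integration by parts is legitimate under the assumed boundary conditions (the biharmonic term being the most delicate), and keeping track of signs so that the dissipation term and all stabilization contributions land on the correct side of the inequality. If one wanted unconditional stability \emph{without} the stabilization terms (as remarked in the text), the same computation with $S_1=S_2=S_3=0$ still closes, since those terms only add nonnegative quantities.
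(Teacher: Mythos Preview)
Your proposal is correct and follows essentially the same approach as the paper: test \eqref{ieq_system1_1} with $\mu^{n+1}$ to expose the dissipation, test \eqref{ieq_system1_2} with $u^{n+1}-u^n$ to pick up the nonnegative stabilization terms, and use \eqref{ieq_system1_3} with the identity $2(a-b)a=a^2-b^2+(a-b)^2$ to telescope the $V$-energy. Your extra care about the boundary conditions for the biharmonic stabilization term and the explicit hypotheses $S_1,S_2,S_3\geqslant0$, $M(u^n)\geqslant0$ are points the paper leaves implicit but are indeed needed.
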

\begin{proof}
	We first take $ L^2 $ inner product of \eqref{ieq_system1_1} with $ \mu^{n+1} $ in $\Omega$. Using the integration by parts and the periodic boundary condition, we have 
	\begin{equation}\label{energy_law_eq1}
		\frac{1}{\delta t}\int_\Omega\left(u^{n+1}-u^n\right)\mu^{n+1}\dx=-\frac{1}{\varepsilon}\int_\Omega M(u^n)|\nabla(N(u^n)\mu^{n+1})|^2\dx
	\end{equation}
	Similarly, by taking $ L^2 $ inner product of \eqref{ieq_system1_2} with $ (u^{n+1}-u^n) $ in $\Omega$, we obtain
	\begin{equation}\label{energy_law_eq2}
		\begin{aligned}
			\int_\Omega(u^{n+1}-u^n)\mu^{n+1}\dx&=\int_\Omega H^nV^{n+1}(u^{n+1}-u^n)\dx+\frac{S_1}{\varepsilon}\int_\Omega|u^{n+1}-u^n|^2\dx\\
			&~~~+S_2\varepsilon\int_\Omega|\nabla(u^{n+1}-u^n)|^2\dx+ S_3\varepsilon\int_\Omega|\Delta(u^{n+1}-u^n)|^2\dx\\
			&\geqslant\int_\Omega H^nV^{n+1}(u^{n+1}-u^n)\dx.
		\end{aligned}
	\end{equation}
	At last, we take $ L^2 $ inner product of \eqref{ieq_system1_3} with $ 2V^{n+1} $ in $\Omega$:
	\begin{equation}\label{energy_law_eq3}
		\begin{aligned}
			\int_\Omega H^nV^{n+1}(u^{n+1}-u^n)\dx&=\int_\Omega2V^{n+1}(V^{n+1}-V^n)\dx\\
			&=\int_\Omega |V^{n+1}|^2\dx-\int_\Omega |V^{n}|^2\dx+\int_\Omega |V^{n+1}-V^n|^2\dx\\
			&\geqslant E^{n+1}-E^n,
		\end{aligned}
	\end{equation}
	where the second equality is due to the identity $ 2(a-b)a=a^2-b^2+(a-b)^2 $.
	
	Combing \eqref{energy_law_eq1}, \eqref{energy_law_eq2} and \eqref{energy_law_eq3}, we get the discrete energy law \eqref{energy_law}. 
\end{proof}

\begin{remark}
	Our numerical method guarantees the decay of the modified energy; however, it does not explicitly enforce the boundedness of the phase variable \( u \) within the physical range \([-1, 1]\). This limitation could lead to non-physical solutions in certain scenarios, which is undesirable in practice. Addressing this issue represents an important direction for future research to improve the robustness and physical consistency of the method.
\end{remark}

To discretize the equations \eqref{ieq_system1_1}-\eqref{ieq_system1_3} in space, we employ finite differences on a uniformly spaced grid. Unless otherwise specified, we progressively refine the mesh until the discretization error becomes negligible. For instance, in the case of \(\varepsilon=0.01\), \(\delta x=1/256\) is sufficient to show the convergence order. The stabilization parameters in the numerical scheme  \eqref{ieq_system1_1}-\eqref{ieq_system1_3} are $S_1=S_2=4$, $S_3=0$, consistent with the values used in Ref.~\cite{xu2019efficient}. The resulting linear systems are solved using the generalized minimal residual method (GMRES) with an incomplete LU preconditioner. In practice, we utilized the built-in MATLAB solver for the efficient solution of these linear equations. 

\section*{Acknowledgments}
The work of Zhen Zhang was partially supported by National Key R\&D Program of China (2023YFA1011403), the NSFC grant (92470112), the Shenzhen Sci-Tech Inno-Commission Fund (20231120102244002), and the Guangdong Provincial Key Laboratory of Computational Science and Material Design (No. 2019B030301001). The work of Wei Jiang was partially supported by the NSFC Grant (No.~12271414).
The work of Tiezheng Qian was partially supported by the Hong Kong RGC General Research Fund (No.~16306121) and the Key Project of the National Natural Science Foundation of China (No.~12131010).



\bibliographystyle{plain}
\bibliography{refs}

\end{document}